\documentclass[12pt]{article}%
\usepackage{graphicx}%
\usepackage[normalem]{ulem}
\usepackage{amsmath,amsthm,amsfonts,
amsbsy,amssymb,upref,enumerate,bigstrut,
color,mathtools,mathrsfs,float,bm,dsfont,scalefnt,mathtools}
\usepackage{natbib}
\usepackage{lipsum}
\usepackage{stmaryrd}
\usepackage{authblk} 
\usepackage{subfigure}
\usepackage{pgfplots}
\usetikzlibrary{pgfplots.fillbetween}

\usepackage{newtxtext,newtxmath} 
\usepackage{mhchem} 

\usepackage{caption}
\usepackage{tikz}
\usetikzlibrary{intersections}

\usepackage[colorlinks=true,linkcolor=blue,citecolor=blue]{hyperref}

\usepackage[left=1in,top=1.2in,right=0.7in]{geometry}

\newtheorem{theorem}{Theorem}[section]

\newtheorem{corollary}[theorem]{Corollary}

\newtheorem{definition}{Definition}[section]
\newtheorem{lemma}[theorem]{Lemma}
\newtheorem{proposition}[theorem]{Proposition}
\newtheorem{remark}[theorem]{Remark}
\numberwithin{equation}{section} 

\makeatletter
\def\@seccntformat#1{\@ifundefined{#1@cntformat}%
	{\csname the#1\endcsname\quad}
	{\csname #1@cntformat\endcsname}
}
\makeatother

\newif\ifShowComments
\ShowCommentstrue
\def\strutdepth{\dp\strutbox}
\def\druk#1{\strut\vadjust{\kern-\strutdepth
        {\vtop to \strutdepth{%
                \baselineskip\strutdepth\vss
                        \llap{\hbox{#1}\quad}\null}}}}

\title{\bf
Income inequality estimation with gamma mixtures
}

\author[1]{Roberto Vila\thanks{rovig161@gmail.com \, (Corresponding author)}}
\author[1,2]{Helton Saulo\thanks{heltonsaulo@gmail.com \,}} 
\author[1]{Felipe Quintino\thanks{felipesquintino2@gmail.com}}
\affil[1]{Department of Statistics, University of
	 Bras\'ilia, Bras\'ilia, Brazil}
\affil[2]{
Department of Economics, Federal University of Pelotas, Pelotas, Brazil}

\begin{document}
\maketitle

\begin{abstract}
This paper studies the estimation of the $m$th Gini index under finite mixtures of gamma distributions. We derive closed-form expressions for the $m$th Gini index and for the expectation and bias of its non-parametric U-statistic estimator, extending previous results for both single gamma populations \citep{Baydil2025,VilaSaulomth2025} and gamma mixture models \citep{VilaSaulomixture2025}. We further establish the asymptotic properties of the estimator for gamma mixtures sharing a common rate parameter, including an asymptotic lower bound for the bias, asymptotic unbiasedness, strong consistency, and asymptotic normality. Although these theoretical results require a common rate parameter, a Monte Carlo study also investigates the estimator under mixtures with different rates and compares its performance with bias-corrected and parametric estimators. Finally, the proposed methodology is illustrated through the analysis of an income dataset.
\end{abstract}
\smallskip
\noindent
{\small {\bfseries Keywords.} {Gamma distribution, $m$th Gini index, 
biased estimator, Monte Carlo simulations, income inequality}}
\\
{\small{\bfseries Mathematics Subject Classification (2020).}
{MSC 60E05 $\cdot$ MSC 62Exx $\cdot$ MSC 62Fxx.}}


\section{Introduction}

Income data are well known to exhibit right-skewed distributions with a long tail corresponding to a relatively small proportion of high-income observations. To accommodate this behavior, finite mixture models have become a popular alternative to single distributions, allowing heterogeneous populations to be represented through a collection of homogeneous subpopulations. Mixture models have been successfully applied to income data using Lognormal mixtures \cite[cf.][]{flachaire2007estimation, lubrano2016income}, beta-2 mixtures \cite[cf.][]{chotikapanich2007estimating}, gamma mixtures (cf. \cite{Chotikapanich2008}), and Weibull mixtures \cite[cf.][]{bakar2022income}. A comprehensive overview of these models and their applications to income distributions is provided by \cite{chotikapanich2008modeling}.

Among the many measures of economic inequality, the Gini coefficient remains one of the most widely used due to its simple interpretation and attractive theoretical properties. Recently, \cite{Gavilan-Ruiz2024} introduced the $m$th Gini index, which generalizes the classical Gini coefficient by replacing pairwise comparisons with the range of samples of size $m$. This new index preserves several important properties of the Gini coefficient while providing a broader family of inequality measures.

From a statistical viewpoint, finite-sample bias is an important issue in the estimation of inequality measures. For the classical Gini coefficient,  \cite{Deltas2003} proposed a U-statistic estimator, and  \cite{Baydil2025} showed that it is exactly unbiased under gamma populations. More recently, \cite{VilaSaulomth2025}  established the corresponding unbiasedness result for the $m$th Gini index under the gamma distribution, while \cite{VilaSaulomixture2025} derived analytical expressions for the bias of the Gini coefficient under gamma mixture models.

The main objective of this paper is to investigate the estimation of the $m$th Gini index under finite mixtures of gamma distributions. We first derive a closed-form expression for the $m$th Gini index and then obtain explicit formulas for the expectation and bias of its non-parametric U-statistic estimator. For mixtures sharing a common rate parameter, we establish an asymptotic lower bound for the bias, prove asymptotic unbiasedness, strong consistency, and asymptotic normality of the estimator. Although these theoretical results rely on the common-rate assumption, our Monte Carlo study also examines mixtures with different rate parameters and compares the proposed estimator with bias-corrected and parametric alternatives. Finally, we illustrate the methodology with an application to an income dataset.

The remainder of the paper is organized as follows. Section~\ref{sec:02} reviews the $m$th Gini index and derives its expression under gamma mixture distributions. Section~\ref{Deriving estimator biases} establishes explicit formulas for the expectation and bias of the proposed estimator. Section~\ref{Asymptotic lower bound for the bias} derives an asymptotic lower bound for the bias, while Sections~\ref{Asymptotic unbiasedness} and \ref{Strong consistency and asymptotic normality} establish its asymptotic unbiasedness, strong consistency, and asymptotic normality. Section~\ref{Illustrative simulation study} presents a Monte Carlo study comparing the proposed estimator with bias-corrected and parametric competitors under several gamma mixture scenarios, including mixtures with different rate parameters. Section~\ref{sec:applications} illustrates the methodology through the analysis of an income dataset. Finally, Section~\ref{concluding_remarks} concludes the paper.

\section{Preliminaries and definitions}\label{sec:02}


Let $X_1, \ldots, X_m$ be independent and identically distributed (iid) random variables with the same distribution as a non-negative random variable $X$ such that $\mu = \mathbb{E}[X] > 0$. Following Definition~1 in \cite{Gavilan-Ruiz2024}, for each integer $m \geqslant 2$, the $m$th Gini index of $X$ is defined as
\begin{align}\label{m-pop-Gini}
	IG_m \equiv IG_m(X)
	= \frac{\mathbb{E}\left[X_{(m)} - X_{(1)}\right]}{m\mu},
\end{align}
where $X_{(1)}=\min\{X_1,\ldots,X_m\}$ and $X_{(m)}=\max\{X_1,\ldots,X_m\}$ are the extremal order statistics.

Observe that for $m=2$, the $m$th Gini index coincides with the traditional Gini coefficient \citep{Gini1936}, denoted $G$, namely,
\begin{align}\label{Gini coefficient}
	G \equiv G(X) \equiv IG_2={\mathbb{E}\vert X_1-X_2\vert\over 2\mu}.
\end{align}

Below, we present some results, which formalize the essential properties and characterizations of the $m$th Gini index \citep[see][]{Gavilan-Ruiz2024}:
\begin{itemize}
	\item[(P1)]	The expectations $\mathbb{E}[X_{(1)}]$ and $\mathbb{E}[X_{(m)}]$ exist, and therefore the $m$th Gini index $IG_m$  exists for any integer $m\geqslant 2$.
	\item[(P2)] $0\leqslant IG_m<1$, for all $m \geqslant 2$.
	\item[(P3)] The $m$th Gini index is invariant under positive scalar multiplication of the variable $X$; that is, for any $b > 0$,	
	$
	IG_m(bX)=IG_m(X).
	$
	\item[(P4)] 	The $m$th Gini index is affected by additive shifts in \( X \), meaning it is not translation invariant. More precisely, for any \( a > 0 \),
	$
	IG_m(a + X) = \left[{\mu}/{(a + \mu)}\right] IG_m(X),
	$
	where \( \mu \) denotes the mean of \( X \).
	\item[(P5)] The $m$th Gini index can be interpreted as the covariance between $X$ and a transformation of $X$; that is,
	\begin{align*}
		IG_m
		=
		{1\over\mu} \,
		{\rm Cov}\left(X,F_{X_{(m-1)}}(X)+F_{X_{(1)}}(X)\right),
        \quad m\geqslant 2,
	\end{align*}
	where \( \mu \) is the mean of \( X \), $F_{X_{(m-1)}}(x)=F^{m-1}(x)$ and $F_{X_{(1)}}(x)=1-[1-F(x)]^{m-1}$ are the cumulative distribution functions (CDFs)  of the order statistics $X_{(m-1)}=\max\{X_1,\ldots,X_{m-1}\}$ and $X_{(1)}=\max\{X_1,\ldots,X_{m-1}\}$, respectively.
	By taking $m = 2$ in the preceding formula, we obtain the classical identity for the Gini coefficient \eqref{Gini coefficient} \citep[see][]{Yin2024}:
	$
	G = {R_G(F)}/{\mu},
	$
	where
	$
	R_G(F) \equiv ({1}/{2})\,\mathbb{E}|X_1 - X_2|
	= \operatorname{Cov}\!\left(X,\, 2F(X)-1\right)
	= \int_0^{1} F^{-1}(p)\,(2p - 1)\, \mathrm{d}p
	$
	is the Gini mean difference (GMD).
	
	\item[(P6)] The Lorenz curve $L(\cdot)$ for $X$ is defined by 
	$
		L(p)={\int_0^p F^{-1}(t){\rm d}t/\mu}, \ \text{for any} \ 0\leqslant p\leqslant 1,
	$
	with $\mu$ being the mean of \( X \).
	The $m$th Gini index can be written in terms of Lorenz measures of inequality as follows:
	\begin{align*}
		IG_m
		=
		\left(1-{1\over m}\right) 
		D_{m-1}(F)
		+
		{1\over m}\, G_m(F),
	\end{align*}
	where
	$
		D_n\equiv D_n(F)=(n+1)\mathbb{E}[\{U-L(U)\}U^{n-1}], 
		\ U\sim \mathrm{U}(0,1), \ n\geqslant 1,
	$
	is the Lorenz measure of inequality (for the CDF $F$ of $X$) introduced in \cite{Aaberge2000}, and
	$
		G_n
		\equiv
		G_n(F)
		=
		n(n-1)\mathbb{E}[\{U-L(U)\}(1-U)^{n-2}], 
		\ U\sim \mathrm{U}(0,1), \ n\geqslant 1,
	$
	is the generalized Gini measure introduced in \cite{Donaldson1980}, \cite{Kakwani1980} and \cite{Yitzhaki1983}.
	\item[(P7)]	
	Suppose $X$ is a non-negative random variable with mean $\mu>0$. Then, for every integer $m\geqslant 2$ one can find a value $r_m> 0$
	satisfying 
	$
		IG_m(X)=G(r_m+X),
	$
	where $G$ is the Gini coefficient \eqref{Gini coefficient}.
	Thus, the $m$th Gini index $IG_m$ may be regarded as a genuine Gini coefficient, interpretable through the conventional Gini construct. Specifically, $IG_m$ for a random variable $X$ equals the standard Gini coefficient $G$ evaluated at $r_m+X$, where the quantity $r_m$ determines the adjustment magnitude.
\end{itemize}

\begin{definition}[Mixture of gamma distributions] \label{definition:mixture-gamas}
	A random variable $X$ is said to follow a gamma mixture distribution with parameter vector
	$
	\bm{\theta} = (\pi_1, \dots, \pi_{k-1}, \alpha_1, \dots, \alpha_k, \lambda)^\top,
	$
	denoted by $X \sim \mathrm{GM}(\bm{\theta})$, if its probability density function is given by
	\begin{equation}
		\label{eq:mixture-gama}
		f_X(x; \bm{\theta}) = \sum_{j=1}^{k} \pi_j f_{Z_j}(x; \alpha_j, \lambda), \quad x > 0,
	\end{equation}
	where $k \in \mathbb{N}$ is the number of mixture components and $\pi_j$ are the mixing proportions
	satisfying $\pi_j > 0$ and $\pi_k = 1 - \sum_{j=1}^{k-1} \pi_j$.
	Moreover, here, $f_{Z_j}(x; \alpha_j, \lambda)$ denotes the density of the gamma-distributed random variable
	$Z_j \sim \mathrm{Gamma}(\alpha_j, \lambda)$, with shape parameter $\alpha_j > 0$ and rate parameter
	$\lambda > 0$, that is,
	\begin{equation*}
		f_{Z_j}(x; \alpha_j, \lambda)
		=
		\frac{\lambda^{\alpha_j}}{\Gamma(\alpha_j)} \,
		x^{\alpha_j - 1} \exp\{-\lambda x\}, \quad x > 0,
	\end{equation*}
	where $\Gamma(\cdot)$ denotes the complete gamma function.
\end{definition}

\begin{remark}
    The $GM(\boldsymbol{\theta})$ family admits a reparameterization based on the scale parameter $\beta$, exploiting the identity $\beta = 1/\lambda$.
\end{remark}

Figure 1 in \cite{VilaSaulomixture2025} demonstrates how the gamma mixture distribution constitutes a flexible framework that subsumes numerous key special cases.

{\color{black}
	It is straightforward to see that the random variable $X \sim \mathrm{GM}(\bm{\theta})$ admits the representation
	\begin{align}\label{representation-X}
		X = \sum_{j=1}^{m} \mathds{1}_{\{Y=j\}} Z_j,
	\end{align}
	where $Y \in \{1, \dots, m\}$ is a discrete random variable with $\mathbb{P}(Y=j) = \pi_j$, independent of $Z_j \sim \mathrm{Gamma}(\alpha_j, \lambda)$ for each $j = 1, \dots, m$, and $\mathds{1}_A$ denotes the indicator function of the event $A$.
	Then, the CDF of $X \sim \text{GM}(\bm{\theta})$ is given by
	\begin{align}
		\label{cdf-eq:mixture-gama}
		F_X(x; \bm{\theta}) 
		= 
		\sum_{j=1}^{k} \pi_j F_{Z_j}(x; \alpha_j, \lambda), \quad x > 0.
	\end{align}

	\begin{proposition}
		\label{proposition:ev-gx-11}
		If $X \sim \mathrm{GM}(\bm{\theta})$, then, for any Borel-measurable function $h$ we have 
		\[ 
		\mathbb{E}[h(X)] 
		= 
		\sum_{j=1}^k\pi_j\mathbb{E}[h(Z_j)],
		\]
		where
		$Z_j\sim{\rm Gamma}(\alpha_j,\lambda)$ for $j=1,\ldots,k$, provided the above expectations exist.
	\end{proposition}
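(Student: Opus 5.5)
The plan is to work directly from the density representation \eqref{eq:mixture-gama} and the linearity of the Lebesgue integral. Since $X$ is absolutely continuous with density $f_X(\cdot;\bm\theta)=\sum_{j=1}^k\pi_j f_{Z_j}(\cdot;\alpha_j,\lambda)$ on $(0,\infty)$, the change-of-variables (law of the unconscious statistician) formula gives
\begin{align*}
\mathbb{E}[h(X)]=\int_0^\infty h(x)\,f_X(x;\bm\theta)\,{\rm d}x
=\int_0^\infty h(x)\sum_{j=1}^k\pi_j f_{Z_j}(x;\alpha_j,\lambda)\,{\rm d}x .
\end{align*}
The finite sum can then be pulled out of the integral, and each term $\int_0^\infty h(x) f_{Z_j}(x;\alpha_j,\lambda)\,{\rm d}x$ is identified with $\mathbb{E}[h(Z_j)]$. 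This yields the claimed identity.

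The only point requiring care is justifying the interchange of sum and integral when $h$ is not non-negative. I would first treat $h\geqslant 0$, where linearity of the integral for non-negative measurable functions holds unconditionally, in $[0,\infty]$. For general $h$, I would write $h=h^+-h^-$. Because every $\pi_j>0$, finiteness of $\mathbb{E}[h^\pm(X)]=\sum_j\pi_j\mathbb{E}[h^\pm(Z_j)]$ is equivalent to finiteness of each $\mathbb{E}[h^\pm(Z_j)]$. This is how I would read the hypothesis ``provided the above expectations exist'': under it, subtracting the two non-negative identities is legitimate and no $\infty-\infty$ arises. I expect this integrability bookkeeping to be the main, though mild, obstacle.

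Alternatively, one could argue through the representation \eqref{representation-X}, conditioning on $Y$. Since $\{Y=j\}$ is independent of $Z_j$ and $h(X)=h(Z_j)$ on that event,
\begin{align*}
\mathbb{E}[h(X)]=\sum_{j=1}^k\mathbb{E}\big[\mathds{1}_{\{Y=j\}}h(Z_j)\big]=\sum_{j=1}^k\pi_j\mathbb{E}[h(Z_j)].
\end{align*}
Here the first equality uses $h(X)=\sum_j\mathds{1}_{\{Y=j\}}h(Z_j)$, because exactly one indicator equals one, and the second uses independence. I would present the density argument as the main proof and note this as a remark.
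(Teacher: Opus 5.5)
Your proposal is correct and follows essentially the paper's route. The paper writes $\mathbb{E}[h(X)]$ as a Lebesgue--Stieltjes integral against the mixture CDF \eqref{cdf-eq:mixture-gama} rather than against the mixture density, and then splits it by linearity; your $h=h^+-h^-$ bookkeeping simply makes explicit the integrability condition the paper leaves implicit in ``provided the above expectations exist''.
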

	\begin{proof}
		The proof follows by combining \eqref{cdf-eq:mixture-gama} with the definition of expectation $\mathbb{E}[h(X)] =\int_0^\infty h(x){\rm d}F_X(x; \bm{\theta})$.
	\end{proof}

	We are now ready to prove the principal result of this section.
	\begin{theorem}\label{main-theo-0}
		The $IG_m$ index of $X \sim \mathrm{GM}(\bm{\theta})$ can be characterized as follows:
		\begin{multline*}
			IG_m = IG_m(\boldsymbol{\theta})=
			\left[{1\over m \sum_{j=1}^{k} \pi_j \alpha_j}\right]
			\sum_{\substack{r_1 + \cdots + r_k = m \\
					r_1,\ldots,r_k\geqslant 0}}
			\binom{m}{r_1, \ldots, r_k}
			\left(
			\prod_{j=1}^{k}	
			\pi_j^{r_j} 
			\right)
			\\[0,2cm]
			\times 
			\int_0^\infty 
			\left[
			1-
			\prod_{j=1}^{k}
			{\gamma^{r_j}(\alpha_j,u)\over \Gamma^{r_j}(\alpha_j)}
			-
			\prod_{j=1}^{k}
			{\Gamma^{r_j}(\alpha_j,u)\over \Gamma^{r_j}(\alpha_j)}
			\right]
			{\rm d}u,
		\end{multline*}
		where $\Gamma(a)$, $\gamma(a,b)$, and $\Gamma(a,b)$ represent the complete, lower incomplete, and upper incomplete gamma functions, respectively, 	and
		\begin{align*}
			{\displaystyle {m \choose r_{1},\ldots ,r_{k}}={\frac {m!}{r_{1}!\cdots r_{k}!}}}   
		\end{align*}
		is a multinomial coefficient. 
		The sum is over all nonnegative integers $r_1,\ldots,r_k$ with total sum $m$.
	\end{theorem}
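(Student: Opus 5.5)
The plan is to reduce to $\lambda=1$ by scale invariance, write $\mathbb{E}[X_{(m)}-X_{(1)}]$ as an integral of the CDF, and expand that integrand multinomially.

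\emph{Step 1 (reduction to $\lambda=1$).} By (P3), $IG_m(X)=IG_m(\lambda X)$. If $X\sim\mathrm{GM}(\bm\theta)$, then by the representation \eqref{representation-X} the variable $\lambda X$ is a gamma mixture with the same weights $\pi_j$ and shapes $\alpha_j$ and with rate $1$. So I may assume $\lambda=1$. Then $F_{Z_j}(u)=\gamma(\alpha_j,u)/\Gamma(\alpha_j)$ and $1-F_{Z_j}(u)=\Gamma(\alpha_j,u)/\Gamma(\alpha_j)$. By Proposition~\ref{proposition:ev-gx-11} with $h(x)=x$, the mean is $\mu=\sum_j\pi_j\alpha_j$, which accounts for the prefactor $1/(m\sum_j\pi_j\alpha_j)$.

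\emph{Step 2 (tail integrals for the extremes).} Since $X\ge0$, $\mathbb{P}(X_{(m)}\le u)=F^m(u)$ and $\mathbb{P}(X_{(1)}>u)=[1-F(u)]^m$. The tail formula $\mathbb{E}[W]=\int_0^\infty\mathbb{P}(W>u)\,{\rm d}u$ therefore gives
\[
\mathbb{E}\bigl[X_{(m)}-X_{(1)}\bigr]=\int_0^\infty\Bigl\{1-F^m(u)-[1-F(u)]^m\Bigr\}\,{\rm d}u .
\]
Both expectations are finite by (P1), so the difference is legitimate.

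\emph{Step 3 (multinomial expansion).} By \eqref{cdf-eq:mixture-gama}, $F(u)=\sum_j\pi_j\gamma(\alpha_j,u)/\Gamma(\alpha_j)$. Because $\sum_j\pi_j=1$, we also have $1-F(u)=\sum_j\pi_j\Gamma(\alpha_j,u)/\Gamma(\alpha_j)$. I expand $F^m$ and $(1-F)^m$ by the multinomial theorem. I also write the constant $1=(\sum_j\pi_j)^m=\sum_{r_1+\cdots+r_k=m}\binom{m}{r_1,\ldots,r_k}\prod_j\pi_j^{r_j}$. Collecting the three sums indexed by the same $(r_1,\dots,r_k)$, the integrand becomes
\[
\sum_{r}\binom{m}{r_1,\ldots,r_k}\Bigl(\prod_j\pi_j^{r_j}\Bigr)\Bigl[1-\prod_j a_j^{r_j}(u)-\prod_j b_j^{r_j}(u)\Bigr],
\]
where $a_j=\gamma(\alpha_j,u)/\Gamma(\alpha_j)$ and $b_j=1-a_j$.

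\emph{Step 4 (exchanging sum and integral).} This is the only delicate point, since splitting the integral requires each bracketed term to be separately integrable on $(0,\infty)$. First, each bracket is nonnegative: since $\sum_j r_j=m\ge1$, there is an index $i$ with $r_i\ge1$, and then $\prod_j a_j^{r_j}+\prod_j b_j^{r_j}\le a_i+b_i=1$. Second, each bracket is bounded by $1$, so there is no problem near $0$. Third, as $u\to\infty$, the bound $1-\prod_j a_j^{r_j}\le\sum_j r_j b_j(u)$ together with $\prod_j b_j^{r_j}\le\max_j b_j(u)$ gives domination by a finite combination of gamma tails. Each such tail is integrable, because $\int_0^\infty\Gamma(\alpha_j,u)\,{\rm d}u=\Gamma(\alpha_j+1)<\infty$.

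With integrability in hand, the finite sum passes through the integral. Dividing by $m\mu$ then yields the stated formula.
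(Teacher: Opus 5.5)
Your proposal is correct and follows essentially the same route as the paper. Both use the tail-integral formula $\mathbb{E}[X_{(m)}-X_{(1)}]=\int_0^\infty\{1-F^m-(1-F)^m\}\,{\rm d}u$, then expand $F^m$, $(1-F)^m$ and $1=(\sum_j\pi_j)^m$ multinomially; your reduction to $\lambda=1$ via (P3) just replaces the paper's final substitution $u=\lambda t$. Your Step 4 adds a justification for termwise integration that the paper omits, and it could be shortened: each bracket is nonnegative and the full sum is integrable, so every term is integrable.
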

    \begin{proof}
Since
$
\mathbb{P}(X_{(m)}\leqslant t)=F_X^m(t;\bm\theta)
$
and
$
\mathbb{P}(X_{(1)}>t)=\{1-F_X(t;\bm\theta)\}^m,
$
it follows that
\[
IG_m=
\frac{1}{m\mu}
\int_0^\infty
\left[
1-F_X^m(t;\bm\theta)
-
\{1-F_X(t;\bm\theta)\}^m
\right]{\rm d}t.
\]

Now, using the mixture representation \eqref{cdf-eq:mixture-gama}
together with the multinomial theorem,
\[
\left(\sum_{j=1}^k a_j\right)^m
=
\sum_{\substack{r_1+\cdots+r_k=m\\r_j\geqslant 0}}
\binom{m}{r_1,\ldots,r_k}
\prod_{j=1}^k a_j^{\,r_j},
\]
we obtain
\[
1-F_X^m(t;\bm\theta)
=
\sum_{\substack{r_1+\cdots+r_k=m\\r_j\geqslant 0}}
\binom{m}{r_1,\ldots,r_k}
\prod_{j=1}^k\pi_j^{r_j}
\left[
1-
\prod_{j=1}^kF_{Z_j}^{\,r_j}(t;\alpha_j,\lambda)
\right],
\]
where we used
\[
\sum_{\substack{r_1+\cdots+r_k=m\\r_j\geqslant 0}}
\binom{m}{r_1,\ldots,r_k}
\prod_{j=1}^k\pi_j^{r_j}
=
\left(\sum_{j=1}^k\pi_j\right)^m
=1.
\]
Similarly,
\[
\{1-F_X(t;\bm\theta)\}^m
=
\sum_{\substack{r_1+\cdots+r_k=m\\r_j\geqslant 0}}
\binom{m}{r_1,\ldots,r_k}
\prod_{j=1}^k\pi_j^{r_j}
\prod_{j=1}^k
\{1-F_{Z_j}(t;\alpha_j,\lambda)\}^{r_j}.
\]

Substituting these expansions into the integral yields
\[
IG_m
=
\frac{\displaystyle
\sum_{\substack{r_1+\cdots+r_k=m\\r_j\geqslant 0}}
\binom{m}{r_1,\ldots,r_k}
\prod_{j=1}^k\pi_j^{r_j}
\int_0^\infty
\left[
1-
\prod_{j=1}^kF_{Z_j}^{\,r_j}(t;\alpha_j,\lambda)
-
\prod_{j=1}^k
\{1-F_{Z_j}(t;\alpha_j,\lambda)\}^{r_j}
\right]{\rm d}t}
{m\mu}.
\]

Finally,
$
\mu
=
\sum_{j=1}^k\pi_j\mathbb{E}[Z_j]
=
({1}/{\lambda})
\sum_{j=1}^k\pi_j\alpha_j,
$
since $Z_j\sim{\rm Gamma}(\alpha_j,\lambda)$. Substituting this expression for $\mu$ and applying the change of variable $u=\lambda t$ completes the proof.
\end{proof}

		\begin{remark}\label{main-remark}
		For $\boldsymbol{z}=(z_{j,l}:j=1,\ldots,k; l=1,\ldots,r_j)$ and $\boldsymbol{r}=(r_1,\ldots,r_k)^\top$, we define
		\begin{align*}
			f_{\rm max}(\boldsymbol{z},\boldsymbol{r})
			\equiv 
			\max_{j=1,\ldots,k; l=1,\ldots,r_j} z_{j,l}
			\quad 
			\text{and}
			\quad 
			f_{\rm min}(\boldsymbol{z},\boldsymbol{r})
			\equiv 
			\min_{j=1,\ldots,k; l=1,\ldots,r_j} z_{j,l}.
		\end{align*}
			Note that the integrals in Theorem \ref{main-theo-0} admits the following representations:
		\begin{align*}
			\int_0^\infty 
			\left[
			1-
			\prod_{j=1}^{k}	
			{\gamma^{r_j}(\alpha_j,u)\over \Gamma^{r_j}(\alpha_j)}
			\right]
			{\rm d}u
			=
			\mathbb{E}_{\boldsymbol{Z}}\left[f_{\rm max}(\boldsymbol{Z},\boldsymbol{r})\right]
		\end{align*}
		and
		\begin{align*}
			\int_0^\infty 
			\prod_{j=1}^{k}	
			{\Gamma^{r_j}(\alpha_j,u)\over \Gamma^{r_j}(\alpha_j)}
			{\rm d}u
			=
			\mathbb{E}_{\boldsymbol{Z}}\left[f_{\rm min}(\boldsymbol{Z},\boldsymbol{r})\right],
		\end{align*}
		where all the random variables variables $Z_{j,l}\sim \mathrm{Gamma}(\alpha_j,1)$ are independent and $\mathbb{E}_{\boldsymbol{T}}[\cdot]$ denotes the expectation with respect to the random vector $\boldsymbol{T}$.
		
		Furthermore, if we take $\boldsymbol{R}=(R_1,\ldots,R_k)^\top\sim \mathrm{Multinomial}(m;\pi_1,\ldots,\pi_k)$, the formula in Theorem \ref{main-theo-0} can be expressed in the following compact form:
		\begin{align}\label{f-1}
		IG_m
		=
		\left[{1\over m \sum_{j=1}^{k} \pi_j \alpha_j}\right]
		\mathbb{E}_{\boldsymbol{R}}
		\left[
		\mathbb{E}_{\boldsymbol{Z}}\left[f_{\rm max}(\boldsymbol{Z},\boldsymbol{R})\right]
		-
		\mathbb{E}_{\boldsymbol{Z}}\left[f_{\rm min}(\boldsymbol{Z},\boldsymbol{R})\right]
		\right].
		\end{align}
		\end{remark}

	\section{Bias of the $m$th Gini index estimator}\label{Deriving estimator biases}

	In this section (specifically, in Theorem \ref{main-theo}) we derive a general expression 
	for the expectation of the $m$th Gini index estimator, as defined in Section 3 of \cite{VilaSaulomth2025}. 
	For any $m \leqslant n$, the estimator of $IG_m$ is given by
	\begin{equation}\label{estimator}
		\widehat{IG}_m
		=
		\frac{(m-1)!}{(n-1)\cdots(n-m+1)}
		\,
		\frac{
			\displaystyle
			\sum_{1 \leqslant i_1 < \cdots < i_m \leqslant n}
			\left[
			X_{(i_m)}
			-
			X_{(i_1)}
			\right]
		}{
			\displaystyle \sum_{i=1}^{n} X_i
		} \,
        \mathds{1}_{\{\sum_{i=1}^{n} X_i>0\}},
	\end{equation}
	where $X_{(i_1)}=\min\{X_{i_1},\ldots,X_{i_m}\}$ and $X_{(i_m)}=\max\{X_{i_1},\ldots,X_{i_m}\}$ are the extremal order statistics.
	Here, 
	$X_1,\ldots,X_n$ are iid observations from $X \sim \mathrm{GM}(\boldsymbol{\theta})$.
	
	Setting $m=2$ in~\eqref{estimator} yields the usual estimator of the Gini coefficient,
	denoted by $\widehat{G}$:
	\begin{align*}
		\widehat{G}
		\equiv
		\widehat{IG}_2
		=
		\frac{1}{n-1}
		\,
		\frac{
			\displaystyle\sum_{1 \leqslant i < j \leqslant n} |X_i - X_j|
		}{
			\displaystyle\sum_{i=1}^{n} X_i
		}\,
        \mathds{1}_{\{\sum_{i=1}^{n} X_i>0\}},
	\end{align*}
	which was originally proposed by \cite{Deltas2003}.

	Before the main result (Theorem \ref{main-theo}) of this section can be formally stated and proved, we require some preliminary results.

	\begin{proposition}\label{prop-in}
		Let $Y_1$ be a real random variable and $Y_2$ a non-negative random variable with $\mathbb{P}(Y_2 > 0) = 1$. Then
		\begin{align*}
			\mathbb{E}\left[{Y_1\over Y_2} \,
        \mathds{1}_{\{Y_2>0\}}\right]
			=
			\int_0^\infty
			\mathbb{E}\left[Y_1 \exp\left\{-Y_2 z\right\}\right]
			{\rm d}z,
		\end{align*}
		provided that both the expectations and the improper integral are well-defined.
	\end{proposition}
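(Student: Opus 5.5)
The plan is to start from the elementary identity
\[
\frac{1}{y}=\int_0^\infty \exp\{-yz\}\,{\rm d}z, \qquad y>0.
\]
We apply it pointwise on the event $\{Y_2>0\}$ and then exchange expectation and integral. Concretely, for every $\omega$ with $Y_2(\omega)>0$ we have
\[
\frac{Y_1(\omega)}{Y_2(\omega)}\,\mathds{1}_{\{Y_2(\omega)>0\}} = \int_0^\infty Y_1(\omega)\exp\{-Y_2(\omega) z\}\,{\rm d}z.
\]
Since $\mathbb{P}(Y_2>0)=1$, this identity holds almost surely. The indicator only serves to make the left-hand side well defined on the null set $\{Y_2=0\}$. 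On that set we may take the integrand on the right to be zero as well, after multiplying by $\mathds{1}_{\{Y_2>0\}}$; this does not affect any expectation.

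Next we take expectations and justify swapping $\mathbb{E}$ with $\int_0^\infty{\rm d}z$ by Fubini--Tonelli. We split $Y_1=Y_1^+-Y_1^-$.

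For the nonnegative parts, Tonelli's theorem applies with no integrability hypothesis. It gives
\[
\mathbb{E}\!\left[\frac{Y_1^{\pm}}{Y_2}\mathds{1}_{\{Y_2>0\}}\right]=\int_0^\infty \mathbb{E}\big[Y_1^{\pm}\exp\{-Y_2 z\}\big]\,{\rm d}z
\]
as an identity in $[0,\infty]$.

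The assumption that the expectations and the improper integral are well-defined is read as requiring at least one of these two quantities to be finite, which is what makes $\mathbb{E}[Y_1/Y_2\,\mathds{1}_{\{Y_2>0\}}]$ meaningful. Subtracting the two identities then yields the claim. Linearity of the $z$-integral also recombines $\mathbb{E}[Y_1^+ e^{-Y_2z}]-\mathbb{E}[Y_1^- e^{-Y_2z}]=\mathbb{E}[Y_1e^{-Y_2z}]$ for almost every $z$; this needs $\mathbb{E}|Y_1|e^{-Y_2 z}<\infty$ for those $z$, which holds whenever $\mathbb{E}|Y_1|<\infty$, since $e^{-Y_2z}\leqslant 1$.

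The main obstacle is making the interchange rigorous under the loose hypothesis ``provided ... well-defined''. In the paper's applications everything is nonnegative: $Y_1$ is a sum of ranges and $Y_2=\sum X_i$. There Tonelli alone suffices and both sides may be compared in $[0,\infty]$. In the general signed case, the cleanest route is to assume $\mathbb{E}|Y_1/Y_2|<\infty$, which is equivalent by Tonelli to $\int_0^\infty\mathbb{E}[|Y_1|e^{-Y_2z}]\,{\rm d}z<\infty$, and then invoke Fubini directly.
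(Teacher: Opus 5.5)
Your proof is correct and follows the same route as the paper: apply $1/w=\int_0^\infty e^{-wz}\,{\rm d}z$ pointwise with $w=Y_2$, then interchange $\mathbb{E}$ with the ${\rm d}z$-integral. Your split $Y_1=Y_1^+-Y_1^-$ is more careful than the paper, which invokes Tonelli directly even though $Y_1$ may be signed; note also that the recombination step does not need $\mathbb{E}|Y_1|<\infty$, because finiteness of one of $\mathbb{E}[Y_1^{\pm}Y_2^{-1}\mathds{1}_{\{Y_2>0\}}]$ forces the matching $\mathbb{E}[Y_1^{\pm}e^{-Y_2 z}]$, which is nonincreasing in $z$, to be finite for every $z>0$.
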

	\begin{proof}
		Applying the identity
		$
		\int_{0}^\infty \exp(-w z){\rm d}z={1/w},
		\ w>0,
		$
		and substituting $w=Y_2$, it follows that
		\begin{align*}
			\mathbb{E}
			\left[
			\dfrac{Y_1}{Y_2}  \,
        \mathds{1}_{\{Y_2>0\}}
			\right]
			=
			\mathbb{E}\left[Y_1 \int_0^\infty\exp\left\{-Y_2 z\right\}{\rm d}z\right],
		\end{align*}
		from which it follows that the integrals may be interchanged, as justified by Tonelli's theorem. The proof is thus complete.
	\end{proof}
	
	\begin{lemma}\label{lemma-1}
		If \(X_1, \dots, X_n\) are independent copies of \(X \sim \mathrm{GM}(\bm{\theta})\), then
		\begin{multline*}
			\mathbb{E}
			\left[
			\dfrac{X_{(1)}}{ \sum_{i=1}^n X_i}   \,
        \mathds{1}_{\{\sum_{i=1}^{n} X_i>0\}}
			\right]
			= 
			\sum_{\substack{r_1 + \cdots + r_k = m \\
					r_1, \ldots, r_k\geqslant 0}}
			\binom{m}{r_1, \ldots, r_k}
			\left(\prod_{j=1}^{k}	
			\pi_j^{r_j}\right)
			\\[0,2cm]
			\times 
			\int_0^\infty 
			\prod_{j=1}^{k}	
			{\Gamma^{r_j}(\alpha_j,u)\over \Gamma^{r_j}(\alpha_j)}
			{\rm d}u
			\left[
			\sum_{\substack{s_1 + \cdots + s_k = n-m  \\ s_1,\ldots,s_k\geqslant 0
			}}
			\binom{n-m}{s_1, \ldots, s_k}
			\left(
			\prod_{j=1}^{k}	
			\pi_j^{s_j}
			\right)
			{1	\over \sum_{j=1}^{k}	\alpha_j (r_j+s_j)}
			\right],
		\end{multline*}
		where $X_{(1)}=\min\{X_1,\ldots,X_m\}$.
	\end{lemma}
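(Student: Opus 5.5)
Apply Proposition~\ref{prop-in} with $Y_1=X_{(1)}=\min\{X_1,\ldots,X_m\}$ and $Y_2=\sum_{i=1}^n X_i$, which is positive almost surely. This gives
\[
\mathbb{E}\left[\dfrac{X_{(1)}}{\sum_{i=1}^n X_i}\,\mathds{1}_{\{\sum_{i=1}^{n} X_i>0\}}\right]
=\int_0^\infty \mathbb{E}\left[X_{(1)}\exp\Big\{-z\sum_{i=1}^m X_i\Big\}\right]\left(\mathbb{E}\big[e^{-zX}\big]\right)^{n-m}{\rm d}z .
\]
The factorization uses independence of $(X_1,\ldots,X_m)$ from $(X_{m+1},\ldots,X_n)$. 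By Proposition~\ref{proposition:ev-gx-11}, $\mathbb{E}[e^{-zX}]=\sum_j\pi_j(\lambda/(\lambda+z))^{\alpha_j}$. Expanding its $(n-m)$th power by the multinomial theorem gives the sum over $s_1+\cdots+s_k=n-m$, with weights $\binom{n-m}{s_1,\ldots,s_k}\prod_j\pi_j^{s_j}$ and factor $(\lambda/(\lambda+z))^{\sum_j\alpha_j s_j}$.

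For the first factor, use the representation \eqref{representation-X} for each of $X_1,\ldots,X_m$ and condition on the labels $Y_1,\ldots,Y_m$. By exchangeability, the event that exactly $r_j$ of them equal $j$ has probability $\binom{m}{r_1,\ldots,r_k}\prod_j\pi_j^{r_j}$. On that event, $(X_1,\ldots,X_m)$ consists of independent $Z_{j,l}\sim\mathrm{Gamma}(\alpha_j,\lambda)$, $l\le r_j$, as in Remark~\ref{main-remark}. It then remains to evaluate $\mathbb{E}[f_{\rm min}(\boldsymbol Z,\boldsymbol r)\exp\{-z\sum_{j,l}Z_{j,l}\}]$. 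The key step is exponential tilting: $e^{-zZ}$ times the $\mathrm{Gamma}(\alpha,\lambda)$ density equals $(\lambda/(\lambda+z))^{\alpha}$ times the $\mathrm{Gamma}(\alpha,\lambda+z)$ density. Hence this expectation equals $(\lambda/(\lambda+z))^{\sum_j\alpha_j r_j}$ times the expected minimum under rates $\lambda+z$. By scaling this is $(\lambda+z)^{-1}\int_0^\infty\prod_j\Gamma^{r_j}(\alpha_j,u)/\Gamma^{r_j}(\alpha_j)\,{\rm d}u$, using the tail-integral formula for the minimum, exactly as in Remark~\ref{main-remark}.

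Collecting terms, each pair $(\boldsymbol r,\boldsymbol s)$ contributes the displayed multinomial weights, the $u$-integral, and
\[
\int_0^\infty \frac{1}{\lambda+z}\Big(\frac{\lambda}{\lambda+z}\Big)^{A}{\rm d}z,\qquad A=\sum_j\alpha_j(r_j+s_j).
\]
With $w=\lambda/(\lambda+z)$ this becomes $\int_0^1 w^{A-1}{\rm d}w=1/A$, which is the stated factor. The only delicate step is justifying the interchanges: Tonelli in Proposition~\ref{prop-in}, and the finite sums. Everything is nonnegative, so these are unproblematic. The main care point is the conditioning and tilting bookkeeping, in particular confirming that the rate-free form arises because the $\lambda$-dependence cancels through the substitution $u=(\lambda+z)t$. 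If $A$ could vanish it would need attention, but $m\ge 2$ and $\alpha_j>0$ give $A>0$.
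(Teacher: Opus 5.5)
Your proof is correct and follows essentially the paper's route: Proposition~\ref{prop-in}, then factoring off $\mathscr L_F^{\,n-m}(z)$ by independence, then multinomial expansion, and finally reduction of each term to an upper-incomplete-gamma tail integral under the shifted rate $\lambda+z$. The differences are only in ordering and bookkeeping. You condition on component labels and apply exponential tilting before using the tail formula for the minimum, whereas the paper first writes $X_{(1)}=\int_0^\infty\mathds 1_{\{X_1\geqslant t,\ldots,X_m\geqslant t\}}\,{\rm d}t$ and then expands $\mathbb{E}^m[\mathds 1_{\{X\geqslant t\}}e^{-zX}]$. You also make explicit the step $\int_0^\infty(\lambda+z)^{-1}(\lambda/(\lambda+z))^{A}\,{\rm d}z=1/A$, which the paper's proof leaves implicit.
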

	\begin{proof}
Applying Proposition \ref{prop-in} with
$Y_1=X_{(1)}$ and $Y_2=\sum_{i=1}^nX_i$,
\[
\mathbb{E}\!\left[\frac{X_{(1)}}{\sum_{i=1}^nX_i} \,
        \mathds{1}_{\{\sum_{i=1}^{n} X_i>0\}}\right]
=
\int_0^\infty
\mathbb{E}\!\left[
X_{(1)}
\exp\left\{-z\sum_{i=1}^nX_i\right\}
\right]{\rm d}z.
\]

Using the representation
\[
X_{(1)}
=
\int_0^\infty
\mathds 1_{\{X_1\geqslant t,\ldots,X_m\geqslant t\}}\,{\rm d}t,
\]
Tonelli's theorem and the independence of
$X_1,\ldots,X_n$, we obtain
\[
\mathbb{E}\!\left[\frac{X_{(1)}}{\sum_{i=1}^nX_i} \,
        \mathds{1}_{\{\sum_{i=1}^{n} X_i>0\}}\right]
=
\int_0^\infty\!\!\int_0^\infty
\mathbb{E}^m
\!\left[
\mathds 1_{\{X\geqslant t\}}\exp\{-zX\}
\right]
\mathscr L_F^{\,n-m}(z)
\,{\rm d}t\,{\rm d}z.
\]

Since $X\sim\mathrm{GM}(\boldsymbol\theta)$,
Proposition \ref{proposition:ev-gx-11} yields
\[
\mathbb{E}
\!\left[
\mathds 1_{\{X\geqslant t\}}\exp\{-zX\}
\right]
=
\sum_{j=1}^k
\pi_j \, 
\frac{\lambda^{\alpha_j}}
{(z+\lambda)^{\alpha_j}} \, 
\frac{\Gamma(\alpha_j,(z+\lambda)t)}
{\Gamma(\alpha_j)},
\]
and
\[
\mathscr L_F(z)
=
\sum_{j=1}^k
\pi_j \, 
\frac{\lambda^{\alpha_j}}
{(z+\lambda)^{\alpha_j}}.
\]

Substituting these expressions into the previous identity, expanding both powers by the multinomial theorem, and collecting equal powers of $(z+\lambda)$ give the required multiple sum. Finally, the change of variable
$u=(z+\lambda)t$
completes the proof.
\end{proof}

	\begin{lemma}\label{lemma-2}
		If \(X_1, \dots, X_n\) are independent copies of \(X \sim \mathrm{GM}(\bm{\theta})\), then
		\begin{multline*}
			\mathbb{E}
			\left[
			\dfrac{X_{(m)}}{ \sum_{i=1}^n X_i}  \,
        \mathds{1}_{\{\sum_{i=1}^{n} X_i>0\}} 
			\right]
			= 
			\sum_{\substack{r_1 + \cdots + r_k = m \\
					r_1, \ldots, r_k\geqslant 0}}
			\binom{m}{r_1, \ldots, r_k}
			\left(\prod_{j=1}^{k}	
			\pi_j^{r_j}\right)
			\\[0,2cm]
			\times 
			\int_0^\infty 
			\left[
			1-
			\prod_{j=1}^{k}	
			{\gamma^{r_j}(\alpha_j,u)\over \Gamma^{r_j}(\alpha_j)}
			\right]
			{\rm d}u
			\left[
			\sum_{\substack{s_1 + \cdots + s_k = n-m  \\ s_1,\ldots,s_k\geqslant 0
			}}
			\binom{n-m}{s_1, \ldots, s_k}
			\left(
			\prod_{j=1}^{k}	
			\pi_j^{s_j}
			\right)
			{1	\over \sum_{j=1}^{k}	\alpha_j (r_j+s_j)}
			\right],
		\end{multline*}
		where $X_{(m)}=\max\{X_1,\ldots,X_m\}$.
	\end{lemma}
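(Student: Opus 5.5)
We mirror the proof of Lemma \ref{lemma-1}, replacing the survival event by the complement of the distribution-function event. First, apply Proposition \ref{prop-in} with $Y_1=X_{(m)}$ and $Y_2=\sum_{i=1}^nX_i$. This gives
\[
\mathbb{E}\!\left[\frac{X_{(m)}}{\sum_{i=1}^nX_i}\,\mathds{1}_{\{\sum_{i=1}^{n} X_i>0\}}\right]
=\int_0^\infty \mathbb{E}\!\left[X_{(m)}\exp\Big\{-z\sum_{i=1}^nX_i\Big\}\right]{\rm d}z .
\]
Next, use the layer-cake representation
\[
X_{(m)}=\int_0^\infty\left[1-\mathds 1_{\{X_1<t,\ldots,X_m<t\}}\right]{\rm d}t .
\]
Tonelli's theorem applies because every integrand is nonnegative. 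Together with independence, this yields
\[
\int_0^\infty\!\!\int_0^\infty\Big\{\mathscr L_F^{\,m}(z)-\mathbb{E}^m\big[\mathds 1_{\{X<t\}}e^{-zX}\big]\Big\}\,\mathscr L_F^{\,n-m}(z)\,{\rm d}t\,{\rm d}z ,
\]
where $\mathscr L_F(z)=\sum_j\pi_j\lambda^{\alpha_j}(z+\lambda)^{-\alpha_j}$ as before.

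By Proposition \ref{proposition:ev-gx-11},
\[
\mathbb{E}[\mathds 1_{\{X<t\}}e^{-zX}]=\sum_j\pi_j\,\frac{\lambda^{\alpha_j}}{(z+\lambda)^{\alpha_j}}\,\frac{\gamma(\alpha_j,(z+\lambda)t)}{\Gamma(\alpha_j)} .
\]
Write $a_j(z)=\pi_j\lambda^{\alpha_j}(z+\lambda)^{-\alpha_j}$. Expanding $\big(\sum_ja_j\big)^m$ and $\big(\sum_ja_j\gamma(\alpha_j,\cdot)/\Gamma(\alpha_j)\big)^m$ by the multinomial theorem with the same index vector $\bm r$ makes the bracket equal to
\[
\sum_{\bm r}\binom{m}{\bm r}\prod_j a_j^{r_j}\Big[1-\prod_j\gamma^{r_j}(\alpha_j,(z+\lambda)t)/\Gamma^{r_j}(\alpha_j)\Big].
\]
Expanding $\mathscr L_F^{\,n-m}$ with an index vector $\bm s$ then gives the factor
\[
\prod_j\pi_j^{r_j+s_j}\,\lambda^{\sum_j\alpha_j(r_j+s_j)}(z+\lambda)^{-\sum_j\alpha_j(r_j+s_j)} .
\]

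For each fixed $(\bm r,\bm s)$, substitute $u=(z+\lambda)t$ in the $t$-integral. This produces
\[
(z+\lambda)^{-1}\int_0^\infty\Big[1-\prod_j\gamma^{r_j}(\alpha_j,u)/\Gamma^{r_j}(\alpha_j)\Big]{\rm d}u ,
\]
and the $u$-integral is finite, since by Remark \ref{main-remark} it equals the expectation of a maximum of finitely many gammas. Write $A=\sum_j\alpha_j(r_j+s_j)$. The remaining $z$-integral is
\[
\lambda^{A}\int_0^\infty(z+\lambda)^{-A-1}{\rm d}z=\lambda^A\,\frac{\lambda^{-A}}{A}=\frac1{A},
\]
which is exactly the stated factor $1/\sum_j\alpha_j(r_j+s_j)$. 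Collecting terms gives the claimed formula.

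The step needing most care is justifying the interchange of the finite sums with the double integral when the integrand is written as a difference. Before expanding, the integrand is nonnegative, so Tonelli's theorem handles that part. After expansion, every $(\bm r,\bm s)$ term is itself nonnegative, because $1-\prod\gamma^{r_j}/\Gamma^{r_j}\geqslant0$. Hence the sum can be taken out term by term with no cancellation issues, which is the same bookkeeping as in Lemma \ref{lemma-1}.
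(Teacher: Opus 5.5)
Your proof is correct and follows essentially the same route as the paper. Both apply Proposition~\ref{prop-in}, write $X_{(m)}$ in layer-cake form via the complement of the event that all $m$ observations lie below $t$, use the truncated Laplace transform with lower incomplete gamma functions, expand by the multinomial theorem, and substitute $u=(z+\lambda)t$. Your explicit evaluation $\lambda^{A}\int_0^\infty (z+\lambda)^{-A-1}\,{\rm d}z = 1/A$ and your nonnegativity argument for exchanging the finite sums with the double integral fill in details the paper leaves implicit. Using $\{X<t\}$ instead of $\{X\leqslant t\}$ makes no difference, since the distribution is continuous.
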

    \begin{proof}
The proof is identical to that of Lemma \ref{lemma-1}. Indeed, replacing the representation
\[
X_{(1)}
=
\int_0^\infty
\mathds 1_{\{X_1\geqslant t,\ldots,X_m\geqslant t\}}\,{\rm d}t
\]
by
\[
X_{(m)}
=
\int_0^\infty
\left[
1-
\mathds 1_{\{X_1\leqslant t,\ldots,X_m\leqslant t\}}
\right]{\rm d} t,
\]
we obtain
\[
\mathbb{E}\!\left[\frac{X_{(m)}}{\sum_{i=1}^nX_i} \,
        \mathds{1}_{\{\sum_{i=1}^{n} X_i>0\}}\right]
=
\int_0^\infty\!\!\int_0^\infty
\left\{
\mathscr L_F^n(z)
-
\mathbb{E}^m
\!\left[
\mathds 1_{\{X\leqslant t\}}\exp\{-zX\}
\right]
\mathscr L_F^{\,n-m}(z)
\right\}
{\rm d}t\,{\rm d}z.
\]

Using
\[
\mathbb{E}
\!\left[
\mathds 1_{\{X\leqslant t\}}\exp\{-zX\}
\right]
=
\sum_{j=1}^k
\pi_j \, 
\frac{\lambda^{\alpha_j}}
{(z+\lambda)^{\alpha_j}} \, 
\frac{\gamma(\alpha_j,(z+\lambda)t)}
{\Gamma(\alpha_j)},
\]
together with the Laplace transform of $X$, the multinomial theorem, and the change of variable $u=(z+\lambda)t$, yields the desired expression.
\end{proof}

	By combining Lemmas \ref{lemma-1} and \ref{lemma-2}, we get 
	\begin{theorem}\label{main-theo}
		Consider independent random variables $X_1,\ldots,X_m$, each distributed as $X \sim \mathrm{GM}(\boldsymbol{\theta})$. Then we have the following:
		\begin{multline*}
	\mathbb{E}[\widehat{IG}_m]
	=
	{1\over m}	
	\sum_{\substack{r_1 + \cdots + r_k = m \\ r_1,\ldots,r_k\geqslant 0}}
	\binom{m}{r_1, \ldots, r_k}
	\left(
	\prod_{j=1}^{k}		
	\pi_j^{r_j}
	\right)
	\int_0^\infty 
	\left[
	1-
	\prod_{j=1}^{k}	
	{\gamma^{r_j}(\alpha_j,u)\over \Gamma^{r_j}(\alpha_j)}
	-
	\prod_{j=1}^{k}	
	{\Gamma^{r_j}(\alpha_j,u)\over \Gamma^{r_j}(\alpha_j)}
	\right]
	{\rm d}u
	\\[0,2cm]
	\times 
	\left[
	n
	\sum_{\substack{s_1 + \cdots + s_k = n-m \\ s_1,\ldots,s_k\geqslant 0}}
	\binom{n-m}{s_1, \ldots, s_k}
	\left(	
	\prod_{j=1}^{k}		
	\pi_j^{s_j}
	\right)
	{
		1\over \sum_{j=1}^{k}	\alpha_j (r_j+s_j)
	}
	\right].
\end{multline*}
	\end{theorem}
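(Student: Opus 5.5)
The plan is to use linearity of expectation on the U-statistic in \eqref{estimator}, exchangeability to reduce to a single index tuple, and then Lemmas \ref{lemma-1} and \ref{lemma-2} to evaluate that one term.

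First, since $\mathbb{P}(\sum_{i=1}^n X_i>0)=1$, the indicator can be dropped almost surely. Write the numerator of \eqref{estimator} as a sum over the $\binom{n}{m}$ index tuples $1\leqslant i_1<\cdots<i_m\leqslant n$. The vector $(X_1,\ldots,X_n)$ is exchangeable and the denominator $\sum_{i=1}^n X_i$ is symmetric. Hence each term
\[
\mathbb{E}\bigl[(X_{(i_m)}-X_{(i_1)})/\textstyle\sum_{i=1}^n X_i\bigr]
\]
takes the same value as for the tuple $(1,\ldots,m)$. Integrability is not an issue: $0\leqslant X_{(i_m)}-X_{(i_1)}\leqslant \sum_{i=1}^n X_i$, so every ratio lies in $[0,1]$. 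Therefore
\[
\mathbb{E}[\widehat{IG}_m]=\frac{(m-1)!}{(n-1)\cdots(n-m+1)}\binom{n}{m}\left\{\mathbb{E}\!\left[\frac{X_{(m)}}{\sum_{i=1}^n X_i}\right]-\mathbb{E}\!\left[\frac{X_{(1)}}{\sum_{i=1}^n X_i}\right]\right\},
\]
with $X_{(1)},X_{(m)}$ the extremes of $X_1,\ldots,X_m$. Splitting the expectation into the two pieces is legitimate because each piece is separately bounded by $1$.

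Next, simplify the combinatorial prefactor:
\[
\frac{(m-1)!}{(n-1)\cdots(n-m+1)}\cdot\frac{n(n-1)\cdots(n-m+1)}{m!}=\frac{n}{m}.
\]
This is exactly the factor $\tfrac{1}{m}\cdot n$ that appears in Theorem \ref{main-theo}.

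Finally, substitute Lemma \ref{lemma-2} for the first expectation and Lemma \ref{lemma-1} for the second. The two expressions share the same outer multinomial sum over $\boldsymbol r$ and the same inner bracket over $\boldsymbol s$. They differ only in the $u$-integrals: $\int_0^\infty[1-\prod_j\gamma^{r_j}(\alpha_j,u)/\Gamma^{r_j}(\alpha_j)]\,{\rm d}u$ versus $\int_0^\infty\prod_j\Gamma^{r_j}(\alpha_j,u)/\Gamma^{r_j}(\alpha_j)\,{\rm d}u$. Both integrals are finite, since each is the expectation of a maximum or minimum of finitely many gamma variables (Remark \ref{main-remark}). So the difference of the two formulas can be merged term by term into the single integral of the theorem, which yields the claimed formula.

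None of these steps is a real obstacle; the only points needing care are the exchangeability reduction and checking that the prefactor simplifies to $n/m$. I also note that the statement says ``$X_1,\ldots,X_m$''; it should be read as an iid sample $X_1,\ldots,X_n$ with $m\leqslant n$, as in \eqref{estimator}.
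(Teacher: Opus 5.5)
Your proposal is correct and takes the paper's route: the paper proves Theorem~\ref{main-theo} in one line by combining Lemmas~\ref{lemma-1} and~\ref{lemma-2}. Your exchangeability reduction, the prefactor identity $\frac{(m-1)!}{(n-1)\cdots(n-m+1)}\binom{n}{m}=\frac{n}{m}$, and the term-by-term merging of the two $u$-integrals are exactly the details the paper leaves implicit, and reading the sample as $X_1,\ldots,X_n$ with $m\leqslant n$ is the intended interpretation.
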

	
	\begin{remark}\label{second-remark}
		For  $\boldsymbol{r}=(r_1,\ldots,r_k)^\top$ and $\boldsymbol{s}=(s_1,\ldots,s_k)^\top$, we define
		\begin{align}\label{def-g}
		g_n(\boldsymbol{r},\boldsymbol{s})
		\equiv
		{n\over \sum_{j=1}^{k}	\alpha_j (r_j+s_{j})}.
		\end{align}		

Furthermore, we can simplify the expression in Theorem \ref{main-theo} by introducing two random vectors: 
$\boldsymbol{R} = (R_1,\ldots,R_k)^\top \sim \mathrm{Multinomial}(m;\pi_1,\ldots,\pi_k)$ 
and 
$\boldsymbol{S} = (S_1,\ldots,S_k)^\top \sim \mathrm{Multinomial}(n-m;\pi_1,\ldots,\pi_k)$.
With these definitions, the formula in Theorem~2 can be written in a more compact form:
	\begin{align}\label{f-2}
	\mathbb{E}[\widehat{IG}_m]
	=
	{1\over m}\,
	\mathbb{E}_{\boldsymbol{R}}
	\left[
	\left\{
	\mathbb{E}_{\boldsymbol{Z}}\left[f_{\rm max}(\boldsymbol{Z},\boldsymbol{R})\right]
	-
	\mathbb{E}_{\boldsymbol{Z}}\left[f_{\rm min}(\boldsymbol{Z},\boldsymbol{R})\right]
	\right\}
	\mathbb{E}_{\boldsymbol{S}}\left[g_n(\boldsymbol{R},\boldsymbol{S})\right]
	\right],
\end{align}
where $\mathbb{E}_{\boldsymbol{T}}[\cdot]$ denotes the expectation with respect to  $\boldsymbol{T}$, and $\boldsymbol{Z}$, $f_{\rm max}(\boldsymbol{z},\boldsymbol{r})$ and $f_{\rm min}(\boldsymbol{z},\boldsymbol{r})$ are given in Remark \ref{main-remark}.		
	\end{remark}
	
	\begin{remark}	
		Theorem \ref{main-theo} demonstrates that, due to the scale invariance of $\widehat{IG}_m$, its expectation is independent of the rate $\lambda$.
	\end{remark}

	Combining Theorems \ref{main-theo-0} and \ref{main-theo}, we obtain the following result.
	\begin{corollary}\label{bias} 
		The bias of $\widehat{IG}_m$ relative to $IG_m$, denoted by $\mathrm{Bias}(\widehat{IG}_m, IG_m)$, can be expressed as
		\begin{multline*}
			\mathrm{Bias}(\widehat{IG}_m, IG_m)
			=
			{1\over m}	
			\sum_{\substack{r_1 + \cdots + r_k = m \\ r_1,\ldots,r_k\geqslant 0}}
			\binom{m}{r_1, \ldots, r_k}
			\left(
			\prod_{j=1}^{k}		
			\pi_j^{r_j}
			\right) 
				\int_0^\infty 
			\left[
			1-
			\prod_{j=1}^{k}	
			{\gamma^{r_j}(\alpha_j,u)\over \Gamma^{r_j}(\alpha_j)}
			-
			\prod_{j=1}^{k}	
			{\Gamma^{r_j}(\alpha_j,u)\over \Gamma^{r_j}(\alpha_j)}
			\right]
			{\rm d}u
			\\[0,2cm]
			\times 
			\left[
	n
\sum_{\substack{s_1 + \cdots + s_k = n-m \\ s_1,\ldots,s_k\geqslant 0}}
\binom{n-m}{s_1, \ldots, s_k}
\left(	
\prod_{j=1}^{k}		
\pi_j^{s_j}
\right)
{
	1\over \sum_{j=1}^{k}	\alpha_j (r_j+s_j)
}
			-
			{1\over \sum_{j=1}^{k} \pi_j \alpha_j}
			\right].
		\end{multline*}
	\end{corollary}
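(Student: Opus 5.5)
The plan is to compute $\mathbb{E}[\widehat{IG}_m]-IG_m$ directly, using the expression for $\mathbb{E}[\widehat{IG}_m]$ from Theorem~\ref{main-theo} and the expression for $IG_m$ from Theorem~\ref{main-theo-0}. The key observation is that both expressions have the same outer structure: a sum over $(r_1,\ldots,r_k)$ with $r_1+\cdots+r_k=m$, weighted by $\binom{m}{r_1,\ldots,r_k}\prod_j\pi_j^{r_j}$. Each summand also carries the same integral
\[
I(\boldsymbol r)=\int_0^\infty\Bigl[1-\prod_{j=1}^k\tfrac{\gamma^{r_j}(\alpha_j,u)}{\Gamma^{r_j}(\alpha_j)}-\prod_{j=1}^k\tfrac{\Gamma^{r_j}(\alpha_j,u)}{\Gamma^{r_j}(\alpha_j)}\Bigr]{\rm d}u .
\]

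First I will rewrite $IG_m$ so that it carries the prefactor $1/m$ in front of the sum. Each summand then becomes $I(\boldsymbol r)$ times the constant $1/\sum_j\pi_j\alpha_j$, which does not depend on $\boldsymbol r$. This is simply Theorem~\ref{main-theo-0} with $1/(m\sum_j\pi_j\alpha_j)$ split as $(1/m)\cdot(1/\sum_j\pi_j\alpha_j)$, and the second factor moved inside the sum.

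Next I subtract the two expressions term by term. Linearity of the finite sum lets me factor out the common weight $\binom{m}{\boldsymbol r}\prod_j\pi_j^{r_j}\,I(\boldsymbol r)$. The remaining bracket is
\[
n\sum_{\boldsymbol s}\binom{n-m}{\boldsymbol s}\prod_j\pi_j^{s_j}\Big/\sum_j\alpha_j(r_j+s_j)\;-\;1\Big/\sum_j\pi_j\alpha_j,
\]
which is exactly the bracket in the displayed bias formula.

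The only point needing care is legitimacy. Every term is finite: $I(\boldsymbol r)=\mathbb{E}[f_{\max}]-\mathbb{E}[f_{\min}]$ by Remark~\ref{main-remark}, and gamma variables have finite means. The inner sums are finite because $\sum_j\alpha_j(r_j+s_j)\geqslant \min_j\alpha_j\,(n)>0$. Both expectations exist, since $0\leqslant\widehat{IG}_m\leqslant$ a bounded quantity and $IG_m<1$ by (P2). The subtraction is therefore a manipulation of finite sums of finite quantities, and I expect no real obstacle.
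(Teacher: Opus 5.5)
Your proposal is correct and follows the paper's own route: the paper obtains the corollary by subtracting the expression for $IG_m$ in Theorem~\ref{main-theo-0}, with $1/(m\sum_j\pi_j\alpha_j)$ split as $(1/m)\cdot(1/\sum_j\pi_j\alpha_j)$, from $\mathbb{E}[\widehat{IG}_m]$ in Theorem~\ref{main-theo}, and then factoring out the common multinomial weight and integral in each summand. Your finiteness checks are harmless additions that the paper leaves implicit: $\sum_j\alpha_j(r_j+s_j)\geqslant n\min_j\alpha_j>0$, and $0\leqslant\widehat{IG}_m\leqslant 1$, since each range is at most the sum of its $m$ observations.
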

	
	\begin{remark}\label{rem-sup}
		Note that the sample-size dependence of  $\mathrm{Bias}(\widehat{IG}_m, IG_m)$ is entirely captured by the quantity 
		\begin{align*}
			n
		\sum_{\substack{s_1 + \cdots + s_k = n-m \\ s_1,\ldots,s_k\geqslant 0}}
		\binom{n-m}{s_1, \ldots, s_k}
		\left(	
		\prod_{j=1}^{k}		
		\pi_j^{s_j}
		\right)
		{
			1\over \sum_{j=1}^{k}	\alpha_j (r_j+s_j)
		}
		=
		\mathbb{E}\left[g_n(\boldsymbol{r},\boldsymbol{S})\right],
		\end{align*}
		where $g_n(\boldsymbol{r},\boldsymbol{s})$ is as given in \eqref{def-g} and $\boldsymbol{S} = (S_1,\ldots,S_k)^\top \sim \mathrm{Multinomial}(n-m;\pi_1,\ldots,\pi_k)$ is as in Remark \ref{second-remark}.
	\end{remark}

	\begin{remark}
		With the exception of $\alpha_j=1$ for $j=1,\ldots,k$, observe that the integrals in Theorem \ref{main-theo} and Corollary \ref{bias} are not representable in closed form using standard mathematical functions and therefore need to be computed numerically.
	\end{remark}
	
	\begin{remark}
		Setting $\alpha_j = \alpha$ for $j = 1, \dots, k$ in Corollary \ref{bias} yields 
		\[
		\mathrm{Bias}(\widehat{IG}_m, IG_m) = 0,
		\] 
		thereby validating the unbiasedness of the $m$th Gini index estimator for gamma populations as established in \cite{VilaSaulomth2025}. In the special case $m=2$, we recover the unbiasedness result previously established by \cite{Baydil2025}.
	\end{remark}

	\begin{remark}
		By combining formulas in \eqref{f-1} and \eqref{f-2} we obtain a compact formula for the bias of $\widehat{IG}_m$ relative to $IG_m$:
		\begin{align*}
\mathrm{Bias}(\widehat{IG}_m, IG_m)
=
	{1\over m}\,
\mathbb{E}_{\boldsymbol{R}}
\left[
\left\{
\mathbb{E}_{\boldsymbol{Z}}\left[f_{\rm max}(\boldsymbol{Z},\boldsymbol{R})\right]
-
\mathbb{E}_{\boldsymbol{Z}}\left[f_{\rm min}(\boldsymbol{Z},\boldsymbol{R})\right]
\right\}
\left\{
\mathbb{E}_{\boldsymbol{S}}\left[g_n(\boldsymbol{R},\boldsymbol{S})\right]
-
{1\over \sum_{j=1}^{k} \pi_j \alpha_j}
\right\}
\right],
		\end{align*}
		where where $\mathbb{E}_{\boldsymbol{T}}[\cdot]$ denotes the expectation with respect to  $\boldsymbol{T}$, 
        $\boldsymbol{Z}$, $f_{\rm max}(\boldsymbol{z},\boldsymbol{r})$ and $f_{\rm min}(\boldsymbol{z},\boldsymbol{r})$ are given in Remark \ref{main-remark}, and $\boldsymbol{R}$, $\boldsymbol{S}$ and $g_n(\boldsymbol{r},\boldsymbol{s})$ are given in Remark \ref{second-remark}.
	\end{remark}

	\section{
    Asymptotic lower bound for the bias} \label{Asymptotic lower bound for the bias}

    This section establishes an asymptotic lower bound for the bias of
$\widehat{IG}_m$. Using the multinomial structure of
$\boldsymbol{S}$ together with Jensen's inequality, we first derive a
lower bound for $\mathbb{E}[g_n(\boldsymbol{r},\boldsymbol{S})]$.
Combining this result with the bias representation obtained in
Corollary~\ref{bias} leads to an asymptotic lower bound for
$\mathrm{Bias}(\widehat{IG}_m,IG_m)$.
	
Indeed, given that
$
\boldsymbol{S}=(S_{1},\ldots,S_{k})^\top
\sim
\mathrm{Multinomial}(n-m;\pi_1,\ldots,\pi_k)
$
and
$
\mathbb{E}[S_j]=(n-m)\pi_j,
$
we have
\begin{align*}
\mathbb{E}\!\left[\frac{1}{g_n(\boldsymbol{r},\boldsymbol{S})}\right]
=
\frac{\sum_{j=1}^{k}\alpha_j(r_j+\mathbb{E}[S_j])}{n}
=
\sum_{j=1}^{k}\pi_j\alpha_j
+
{\rm o}(1),
\end{align*}
where $g_n(\boldsymbol{r},\boldsymbol{s})$ is defined in \eqref{def-g} and
\[
{\rm o}(1)
=
\frac{\sum_{j=1}^{k}\alpha_j(r_j-m\pi_j)}{n}.
\]

Since $x\mapsto1/x$ is convex on $(0,\infty)$, Jensen's inequality yields
\[
\frac{1}{\mathbb{E}[g_n(\boldsymbol{r},\boldsymbol{S})]}
\leqslant 
\mathbb{E}\!\left[\frac{1}{g_n(\boldsymbol{r},\boldsymbol{S})}\right]
=
\sum_{j=1}^{k}\pi_j\alpha_j
+
{\rm o}(1).
\]
Equivalently,
\begin{equation}\label{non-negative-assymp}
\mathbb{E}[g_n(\boldsymbol{r},\boldsymbol{S})]
\geqslant
\frac{1}
{\sum_{j=1}^{k}\pi_j\alpha_j+{\rm o}(1)}.
\end{equation}

Since, by Corollary~\ref{bias} and Remark~\ref{rem-sup},
\[
\mathrm{Bias}(\widehat{IG}_m,IG_m)
\geqslant 
C_m
\left(
\frac{1}
{\sum_{j=1}^{k}\pi_j\alpha_j+{\rm o}(1)}
-
\frac{1}{\sum_{j=1}^{k}\pi_j\alpha_j}
\right),
\]
where
\[
C_m
=
\frac{1}{m}
\sum_{\substack{r_1+\cdots+r_k=m\\ r_1,\ldots,r_k\geqslant 0}}
\binom{m}{r_1,\ldots,r_k}
\left(\prod_{j=1}^{k}\pi_j^{r_j}\right)
\int_{0}^{\infty}
\left[
1
-
\prod_{j=1}^{k}
\frac{\gamma^{r_j}(\alpha_j,u)}{\Gamma^{r_j}(\alpha_j)}
-
\prod_{j=1}^{k}
\frac{\Gamma^{r_j}(\alpha_j,u)}{\Gamma^{r_j}(\alpha_j)}
\right]
\,\mathrm{d}u.
\]

    \begin{remark}
Let
$
\alpha_{(1)}
=
\min\{\alpha_1,\ldots,\alpha_k\}.
$
Since
\begin{align*}
\frac{\sum_{j=1}^{k}\alpha_j(r_j+s_j)}{n}
\frac{\sum_{j=1}^{k}\alpha_jr_j}{n}
+
\frac{\alpha_{(1)}(n-m)}{n}  
=
\alpha_{(1)}
+
\delta_n,
\end{align*}
where
\[
\delta_n
=
\frac{\sum_{j=1}^{k}(\alpha_j-\alpha_{(1)})r_j}{n}
\geqslant 0,
\quad
\delta_n\longrightarrow0,
\]
we obtain
\begin{equation}\label{id-22}
g_n(\boldsymbol{r},\boldsymbol{s})
=
\frac{n}
{\sum_{j=1}^{k}\alpha_j(r_j+s_j)}
\leqslant 
\frac{1}{\alpha_{(1)}+\delta_n}
\leqslant 
\frac{1}{\alpha_{(1)}}.
\end{equation}
Hence, combining \eqref{non-negative-assymp} and \eqref{id-22} yields
\[
\frac{1}
{\sum_{j=1}^{k}\pi_j\alpha_j+{\rm o}(1)}
\leqslant 
\mathbb{E}[g_n(\boldsymbol{r},\boldsymbol{S})]
\leqslant 
\frac{1}{\alpha_{(1)}+\delta_n}
\leqslant 
\frac{1}{\alpha_{(1)}}.
\]
\end{remark}

	\section{Asymptotic unbiasedness}\label{Asymptotic unbiasedness}
	
	In this section, we show that the bias of the $m$th Gini index estimator, as given in Corollary \ref{bias},  vanishes as the sample size $n$ grows (see Theorem \ref{bias-vanishes}). The argument is based on the following technical result.
	
	\begin{proposition} \label{conv-prob}
		It holds that:
		\begin{align*}
		g_n(\boldsymbol{r},\boldsymbol{S})
			\stackrel{\rm P}{\longrightarrow}
			{1\over 
				\sum_{j=1}^{k} \pi_j \alpha_j},
			\quad \text{as} \ n\to\infty,
		\end{align*}
		where $g_n(\boldsymbol{r},\boldsymbol{s})$ is as defined in \eqref{def-g}, $\boldsymbol{S} = (S_1,\ldots,S_k)^\top \sim \mathrm{Multinomial}(n-m;\pi_1,\ldots,\pi_k)$ and ``$\stackrel{{\rm P}}{\longrightarrow}$" denotes convergence in probability.
	\end{proposition}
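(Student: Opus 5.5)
The plan is to write $1/g_n(\boldsymbol{r},\boldsymbol{S})$ as a continuous function of the empirical proportions $S_j/n$, use the weak law of large numbers for these proportions, and then apply the continuous mapping theorem. Throughout, $\boldsymbol{r}$ is held fixed with $r_1+\cdots+r_k=m$ and $m$ fixed. From \eqref{def-g},
\[
\frac{1}{g_n(\boldsymbol{r},\boldsymbol{S})}
=
\frac{\sum_{j=1}^{k}\alpha_j r_j}{n}
+
\frac{n-m}{n}\sum_{j=1}^{k}\alpha_j\,\frac{S_j}{n-m}.
\]
The first term is deterministic and of order $1/n$, and the factor $(n-m)/n$ tends to $1$.

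The first key step is to show that $S_j/(n-m)\stackrel{\rm P}{\longrightarrow}\pi_j$ for each $j$. Each marginal satisfies $S_j\sim\mathrm{Binomial}(n-m,\pi_j)$, so $\mathbb{E}[S_j/(n-m)]=\pi_j$ and ${\rm Var}(S_j/(n-m))=\pi_j(1-\pi_j)/(n-m)\to 0$. Chebyshev's inequality then gives the claim. Alternatively, one can realise $S_j$ as a sum of $n-m$ iid indicators and invoke the weak law of large numbers.

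Next, since $k$ is finite, the linear combination converges: $\sum_j\alpha_j S_j/(n-m)\stackrel{\rm P}{\longrightarrow}\sum_j\pi_j\alpha_j$. Combining this with the deterministic limits (Slutsky) yields
\[
\frac{1}{g_n(\boldsymbol{r},\boldsymbol{S})}\stackrel{\rm P}{\longrightarrow}\sum_{j=1}^{k}\pi_j\alpha_j .
\]

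Finally, the map $x\mapsto 1/x$ is continuous at $\sum_j\pi_j\alpha_j$, which is strictly positive because $\pi_j>0$ and $\alpha_j>0$. The continuous mapping theorem then gives the stated convergence. One must also check that $g_n$ is a.s. well defined. This holds because $\sum_j\alpha_j(r_j+S_j)\geqslant \alpha_{(1)}n>0$, using \eqref{id-22}.

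No step is a genuine obstacle. The only point needing care is that the limit of $1/g_n$ is bounded away from zero, so that inverting preserves convergence in probability. This is immediate from positivity of the $\alpha_j$ and $\pi_j$, and in fact the bound $g_n\leqslant 1/\alpha_{(1)}$ from \eqref{id-22} also gives uniform boundedness. That boundedness is what will later upgrade convergence in probability to convergence of $\mathbb{E}[g_n(\boldsymbol{r},\boldsymbol{S})]$ for the asymptotic unbiasedness result.
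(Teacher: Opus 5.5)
Your proposal is correct and follows essentially the same route as the paper: the law of large numbers for $\boldsymbol{S}/(n-m)$, Slutsky to absorb the deterministic $O(1/n)$ terms, and the continuous mapping theorem for $x\mapsto 1/x$ at the positive limit $\sum_{j}\pi_j\alpha_j$. The only difference is cosmetic: you normalize by $n$ and invert at the end, whereas the paper normalizes by $n-m$, inverts, and then multiplies by $n/(n-m)\to 1$; your explicit check that $\sum_{j}\alpha_j(r_j+S_j)\geqslant\alpha_{(1)}n>0$ is a detail the paper leaves implicit.
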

	\begin{proof}
		Since	$\boldsymbol{S}\sim \textrm{Multinomial}(n-m;\pi_1,\ldots,\pi_k)$, by weak law of large numbers,
		\begin{align*}
			{\boldsymbol{S}\over n-m} \stackrel{\rm P}{\longrightarrow}
			(\pi_1,\ldots,\pi_k)^\top.
		\end{align*}
		Applying the continuous mapping theorem, we get
		\begin{align}\label{conv-in}
			{1\over n-m}
			\sum_{j=1}^{k}	\alpha_jS_j
			\stackrel{\rm P}{\longrightarrow}
			\sum_{j=1}^{k} \pi_j \alpha_j.
		\end{align}
		As
		\begin{align*}
			{1\over n-m}
			\sum_{j=1}^{k}	\alpha_j (r_j+S_{j})
			=
			{1\over n-m}
			\sum_{j=1}^{k}	\alpha_j r_j
			+
			{1\over n-m}
			\sum_{j=1}^{k}	\alpha_j S_{j}
		\end{align*}
		and
		$
			\lim_{n\to\infty}
			\sum_{j=1}^{k}	\alpha_j r_j/(n-m)
			=0,
		$
		by using \eqref{conv-in} and by applying Slutsky's theorem we have
		\begin{align*}
			{1\over n-m}
			\sum_{j=1}^{k}	\alpha_j (r_j+S_{j})
			\stackrel{\rm P}{\longrightarrow}
			\sum_{j=1}^{k} \pi_j \alpha_j.
		\end{align*}
		Again, continuous mapping theorem gives 
		\begin{align}\label{conv-in-1}
			{n-m\over 
				\sum_{j=1}^{k}	\alpha_j (r_j+S_{j})
			}
			\stackrel{\rm P}{\longrightarrow}
			{1\over 
				\sum_{j=1}^{k} \pi_j \alpha_j}.
		\end{align}
		As
		\begin{align*}
			g_n(\boldsymbol{r},\boldsymbol{S})
			=
			{n\over \sum_{j=1}^{k}	\alpha_j (r_j+S_{j})}
			=
			{n\over n-m} \, 
			{n-m\over 
				\sum_{j=1}^{k}	\alpha_j (r_j+S_{j})}
		\end{align*}
		and 
		$
			\lim_{n\to\infty}
			{n/(n-m)}
			=1,
		$
		by combining \eqref{conv-in-1} with Slutsky's theorem yields the desired convergence.
		This concludes the proof.
	\end{proof}
	
	\begin{theorem} \label{bias-vanishes}
		It holds that:
		\begin{align*}
			\lim_{n\to\infty}\mathrm{Bias}(\widehat{IG}_m, IG_m)=0.
		\end{align*}
	\end{theorem}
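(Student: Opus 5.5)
We start from the bias formula of Corollary~\ref{bias} in the form given in Remark~\ref{rem-sup}:
\[
\mathrm{Bias}(\widehat{IG}_m, IG_m)=\frac1m\sum_{\boldsymbol r}\binom{m}{r_1,\ldots,r_k}\Big(\prod_{j}\pi_j^{r_j}\Big)\,I(\boldsymbol r)\left\{\mathbb{E}[g_n(\boldsymbol r,\boldsymbol S)]-\frac{1}{\sum_{j}\pi_j\alpha_j}\right\}.
\]
Here $I(\boldsymbol r)$ is the integral of $1-\prod\gamma^{r_j}/\Gamma^{r_j}-\prod\Gamma^{r_j}/\Gamma^{r_j}$.

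Neither the outer sum nor the integrals $I(\boldsymbol r)$ depend on $n$. The sum has finitely many terms, since $m$ is fixed. Each $I(\boldsymbol r)$ is finite: by Remark~\ref{main-remark} it equals $\mathbb{E}[f_{\max}]-\mathbb{E}[f_{\min}]$, and this is bounded by the sum of the means of the $m$ gamma variables. It therefore suffices to show, for each fixed $\boldsymbol r$, that
\[
\mathbb{E}[g_n(\boldsymbol r,\boldsymbol S)]\longrightarrow \frac{1}{\sum_{j}\pi_j\alpha_j},\qquad n\to\infty.
\]

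We get this convergence of expectations from convergence in probability plus uniform integrability.

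\begin{itemize}
\item Proposition~\ref{conv-prob} gives $g_n(\boldsymbol r,\boldsymbol S)\stackrel{\rm P}{\longrightarrow} 1/\sum_j\pi_j\alpha_j$.
\item The bound \eqref{id-22} gives $0< g_n(\boldsymbol r,\boldsymbol s)\leqslant 1/\alpha_{(1)}$ for all $\boldsymbol s$ with $s_1+\cdots+s_k=n-m$, where $\alpha_{(1)}=\min_j\alpha_j>0$. This holds because $\sum_j\alpha_j(r_j+s_j)\geqslant \alpha_{(1)}\sum_j(r_j+s_j)=\alpha_{(1)}n$.
\item So the family $\{g_n(\boldsymbol r,\boldsymbol S)\}_n$ is uniformly bounded by a constant, hence uniformly integrable.
\item The bounded convergence theorem, in its convergence-in-probability version, then yields $\mathbb{E}[g_n(\boldsymbol r,\boldsymbol S)]\to 1/\sum_j\pi_j\alpha_j$.
\end{itemize}

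One subtlety is that $\boldsymbol S$ changes with $n$, so there is no single underlying sequence. This causes no trouble, because the argument only uses the distributions. Alternatively, one can pass to a subsequence converging almost surely via Skorokhod's representation and apply dominated convergence; since every subsequence has a further subsequence with the same limit, the full sequence converges.

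Finally, each bracket in the finite sum tends to zero and multiplies a finite, $n$-independent coefficient, so the bias tends to zero. The only step needing care is justifying the interchange of limit and expectation, and the uniform bound \eqref{id-22} handles it.
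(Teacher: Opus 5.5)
Your proof is correct and follows the paper's argument. You reduce, via Corollary~\ref{bias} and Remark~\ref{rem-sup}, to showing $\mathbb{E}[g_n(\boldsymbol r,\boldsymbol S)]\to 1/\sum_j\pi_j\alpha_j$, then combine Proposition~\ref{conv-prob} with the uniform bound \eqref{id-22} (uniform integrability) to pass the limit through the expectation; your finiteness check on the $n$-independent integrals and the direct bound $\sum_j\alpha_j(r_j+s_j)\geqslant\alpha_{(1)}n$ make explicit details the paper leaves implicit.
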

    \begin{proof}
From Corollary \ref{bias}, Remark \ref{rem-sup}, and the convergence (see Proposition \ref{conv-prob})
\[
g_n(\boldsymbol{r},\boldsymbol{S})
\stackrel{\mathrm{P}}{\longrightarrow}
\frac{1}{\sum_{j=1}^{k}\pi_j\alpha_j},
\quad n\to\infty,
\]
it is enough to prove that the sequence
$
\left\{
g_n(\boldsymbol{r},\boldsymbol{S})
\right\}_{n\ge1}
$
is uniformly integrable. Indeed, convergence in probability implies convergence in distribution, and convergence in distribution together with uniform integrability yields convergence of expectations.

By \eqref{id-22},
\[
0
\leqslant 
g_n(\boldsymbol{r},\boldsymbol{S})
\leqslant 
\frac{1}{\alpha_{(1)}}
\quad \text{a.s. for every } n\geqslant 1.
\]
Hence,
\[
\sup_{n\geqslant 1}
\left\|
g_n(\boldsymbol{r},\boldsymbol{S})
\right\|_{\infty}
\leqslant 
\frac{1}{\alpha_{(1)}}<\infty,
\]
which immediately implies that
$\{g_n(\boldsymbol{r},\boldsymbol{S})\}_{n\geqslant 1}$
is uniformly integrable. Therefore,
\[
\mathbb{E}\!\left[g_n(\boldsymbol{r},\boldsymbol{S})\right]
\longrightarrow
\frac{1}{\sum_{j=1}^{k}\pi_j\alpha_j},
\]
and the desired conclusion follows from Corollary \ref{bias} and Remark \ref{rem-sup}.
\end{proof}

}

\section{Strong consistency and asymptotic normality}\label{Strong consistency and asymptotic normality}

This section establishes the asymptotic properties of
$\widehat{IG}_m$. By representing the estimator as the ratio of a
symmetric $U$-statistic and the sample mean, standard results from the
theory of $U$-statistics are employed to derive its strong consistency
and asymptotic normality.
\begin{theorem}
Assume that $X_1,X_2,\ldots$ are i.i.d. random variables satisfying
$
0<\mu=\mathbb{E}[X]<\infty.
$
Then,
\[
\widehat{IG}_m
\stackrel{\mathrm{a.s.}}{\longrightarrow}
IG_m,
\quad n\to\infty,
\]
where $\stackrel{\mathrm{a.s.}}{\longrightarrow}$ denotes almost sure convergence.
\end{theorem}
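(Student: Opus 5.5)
We write the estimator \eqref{estimator} as a ratio of a $U$-statistic and the sample mean, and treat numerator and denominator separately. Let $h(x_1,\ldots,x_m)=\max\{x_1,\ldots,x_m\}-\min\{x_1,\ldots,x_m\}$, which is a symmetric kernel of degree $m$. Since $\binom{n}{m}=n(n-1)\cdots(n-m+1)/m!$, we have
\[
\frac{(m-1)!}{(n-1)\cdots(n-m+1)}=\frac{1}{m\,n}\binom{n}{m}^{-1}\cdot n\cdot\frac{n}{n}\,,
\]
so that, on the event $\{\sum_{i=1}^n X_i>0\}$,
\[
\widehat{IG}_m=\frac{U_n}{m\,\bar X_n},\qquad U_n=\binom{n}{m}^{-1}\sum_{1\leqslant i_1<\cdots<i_m\leqslant n}h(X_{i_1},\ldots,X_{i_m}),\qquad \bar X_n=\frac1n\sum_{i=1}^nX_i .
\]
The first step is to verify this algebraic identity carefully, since the normalizing constants must match exactly.

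Next we show that both pieces converge almost surely. By Kolmogorov's strong law of large numbers, $\bar X_n\to\mu$ a.s., because $\mathbb{E}|X|=\mu<\infty$. For the numerator, we invoke Hoeffding's strong law for $U$-statistics, which states that $U_n\to\mathbb{E}[h(X_1,\ldots,X_m)]$ a.s. whenever $\mathbb{E}|h(X_1,\ldots,X_m)|<\infty$. The integrability condition holds because $X\geqslant 0$ gives
\[
0\leqslant h\leqslant \max_i X_i\leqslant \sum_{i=1}^m X_i,
\]
so $\mathbb{E}[h]\leqslant m\mu<\infty$. The limit is $\mathbb{E}[X_{(m)}-X_{(1)}]$, which is the numerator in \eqref{m-pop-Gini}; this is consistent with (P1). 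An alternative route, if one prefers, is the reverse-martingale argument: $U_n$ is a reverse martingale with respect to the exchangeable filtration, and the Hewitt--Savage 0--1 law then identifies the limit.

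Finally, we handle the indicator. Since $\mu>0$ and $\bar X_n\to\mu$ a.s., almost surely $\sum_{i=1}^n X_i>0$ for all sufficiently large $n$, so the indicator in \eqref{estimator} eventually equals $1$. On this event the map $(u,v)\mapsto u/(mv)$ is continuous at $(\mathbb{E}[h],\mu)$ because $\mu>0$. The continuous mapping theorem applied pathwise then gives $\widehat{IG}_m\to \mathbb{E}[X_{(m)}-X_{(1)}]/(m\mu)=IG_m$ almost surely.

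The only substantive ingredient is the almost sure convergence of $U_n$ under a first-moment condition alone. I would cite Hoeffding (1961) or the textbook treatment in Serfling (Section 5.4) rather than reprove it. No gamma-mixture structure is needed, so the theorem holds for any nonnegative $X$ with finite positive mean, including $X\sim\mathrm{GM}(\boldsymbol{\theta})$.
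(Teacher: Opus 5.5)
Your proof is correct and takes the paper's route: write $\widehat{IG}_m$ as a $U$-statistic with kernel $\max-\min$ divided by the sample mean, then combine the strong law for $U$-statistics, the classical strong law for $\overline X_n$, and continuity of the ratio at $(\mathbb{E}[X_{(m)}-X_{(1)}],\mu)$ with $\mu>0$; your bound $0\leqslant h\leqslant\sum_{i=1}^m X_i$ and your handling of the indicator supply details the paper leaves implicit. One correction: your first display is off by a factor of $n$, since the right identity is $\frac{(m-1)!}{(n-1)\cdots(n-m+1)}=\frac{n}{m}\binom{n}{m}^{-1}$, but the representation $\widehat{IG}_m=U_n/(m\,\overline X_n)$ you then state is correct, and it is more accurate than the paper's $\widehat{IG}_m=U_n/\overline X$, which drops the factor $1/m$ and, taken literally, would converge to $m\,IG_m$.
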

\begin{proof}
Write
$
\widehat{IG}_m
=
{U_n}/{\overline X},
$
where
\begin{align}\label{de-U-stat}
U_n
=
\frac{m!}{n(n-1)\cdots(n-m+1)}
\sum_{1\leqslant i_1<\cdots<i_m\leqslant n}
h(X_{i_1},\ldots,X_{i_m}),
\quad 
h(x_{1},\ldots,x_{m})=x_{m:m}-x_{1:m},
\end{align}
is a symmetric $U$-statistic of order $m$, and
$
\overline X=(1/n)\sum_{i=1}^nX_i.
$

By the strong law of large numbers for $U$-statistics \citep{ Lee1990,Henze2024},
$
U_n
\stackrel{\mathrm{a.s.}}{\longrightarrow}
\mathbb{E}\!\left[
X_{m:m}-X_{1:m}
\right].
$
Moreover, by the classical strong law of large numbers,
$
\overline X
\stackrel{\mathrm{a.s.}}{\longrightarrow}
\mu.
$
Since $\mu\in(0,\infty)$, the continuous mapping theorem 
concludes the proof.
\end{proof}

\begin{theorem}
Assume that
$
\mathbb{E}[X^2]<\infty.
$
Then,
\[
\sqrt n
(
\widehat{IG}_m-IG_m
)
\stackrel{\mathscr D}{\longrightarrow}
N(0,\chi^2),
\]
where
$\stackrel{\mathscr D}{\longrightarrow}$ denotes convergence in distribution, 
$
\chi^2
=
\nabla g(\vartheta,\mu)^\top
{\bm \Sigma}
\nabla g(\vartheta,\mu),
$
with
$
g(x,y)={x}/{y},
$
$\boldsymbol{\Sigma}$ denotes the asymptotic covariance matrix of
$
\sqrt n
(
U_n-\vartheta,\,
\overline X-\mu
),
$
and 
$\vartheta
=
\mathbb{E}[X_{m:m}-X_{1:m}]$.
\end{theorem}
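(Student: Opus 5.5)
Write $\widehat{IG}_m = g(U_n,\overline X)$ with $U_n$ the symmetric $U$-statistic of order $m$ with kernel $h(x_1,\ldots,x_m)=x_{m:m}-x_{1:m}$ defined in \eqref{de-U-stat}, and $g(x,y)=x/y$. On the event $\{\sum_i X_i>0\}$ the indicator in \eqref{estimator} equals one. Since $\mu>0$ and $\overline X\to\mu$ almost surely, this event has probability tending to one, so it does not affect the limit in distribution. The plan is the delta method applied to the bivariate vector $(U_n,\overline X)$, which reduces the statement to a joint central limit theorem.

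\textbf{Step 1 (second moments).} Because $0\leqslant x_{m:m}-x_{1:m}\leqslant \max_i x_i\leqslant \sum_{i=1}^m x_i$ for non-negative arguments, we have $\mathbb{E}[h^2]\leqslant m\,\mathbb{E}\bigl[\sum_i X_i^2\bigr]=m^2\,\mathbb{E}[X^2]<\infty$. Hence Hoeffding's theory applies to $U_n$.

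\textbf{Step 2 (Hoeffding projection).} Define $h_1(x)=\mathbb{E}[h(x,X_2,\ldots,X_m)]-\vartheta$. Hoeffding's decomposition \citep{Lee1990} gives
\[
\sqrt n\,(U_n-\vartheta)=\frac{m}{\sqrt n}\sum_{i=1}^n h_1(X_i)+o_{\rm P}(1).
\]
We also have trivially $\sqrt n(\overline X-\mu)=n^{-1/2}\sum_i (X_i-\mu)$. Thus the vector $\sqrt n(U_n-\vartheta,\overline X-\mu)$ equals $n^{-1/2}\sum_i \bigl(m h_1(X_i),\,X_i-\mu\bigr)$ up to $o_{\rm P}(1)$. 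These summands are iid, mean zero, and have finite covariance, because $\mathbb{E}[h_1^2]\leqslant \mathbb{E}[h^2]<\infty$ and $\mathbb{E}[X^2]<\infty$. The multivariate CLT (Cramér--Wold) together with Slutsky's theorem then yields
\[
\sqrt n\,(U_n-\vartheta,\overline X-\mu)\stackrel{\mathscr D}{\longrightarrow} N(\boldsymbol 0,\boldsymbol\Sigma),
\]
where
\[
\boldsymbol\Sigma=\begin{pmatrix} m^2\,{\rm Var}(h_1(X)) & m\,{\rm Cov}(h_1(X),X)\\ m\,{\rm Cov}(h_1(X),X) & {\rm Var}(X)\end{pmatrix}.
\]
This is the main step. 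The only delicate point is controlling the remainder of the decomposition, and it is standard: it has variance $O(n^{-2})$, so after multiplying by $\sqrt n$ it is $O_{\rm P}(n^{-1/2})$.

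\textbf{Step 3 (delta method).} The function $g$ is differentiable at $(\vartheta,\mu)$ since $\mu>0$, with $\nabla g(\vartheta,\mu)=(1/\mu,\,-\vartheta/\mu^2)^\top$. The delta method gives
\[
\sqrt n\bigl(U_n/\overline X-\vartheta/\mu\bigr)\stackrel{\mathscr D}{\longrightarrow}N(0,\chi^2),\qquad \chi^2=\nabla g(\vartheta,\mu)^\top\boldsymbol\Sigma\,\nabla g(\vartheta,\mu).
\]

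\textbf{Step 4 (identification of the centring).} It remains to check that $\vartheta/\mu$ is the target. The definition \eqref{m-pop-Gini} divides by $m\mu$, so the normalizing constant of $U_n$ in \eqref{de-U-stat} must match: $\widehat{IG}_m=U_n/(m\overline X)$ with the constant $(m-1)!/[(n-1)\cdots(n-m+1)]=\tfrac1m\cdot m!\,n/[n(n-1)\cdots(n-m+1)]$. I would absorb the factor $1/m$ into $g$ or into $h$, so that the centring is exactly $IG_m$. Finally, the indicator is removed by Slutsky's theorem, since the indicator converges to one in probability.
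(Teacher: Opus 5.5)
Your argument follows the paper's route. You express the estimator through $(U_n,\overline X)$, obtain joint asymptotic normality of the pair, and conclude with the delta method. The paper cites Hoeffding's Theorem~7.3 for the joint normality, whereas you derive it from the Hoeffding projection and the bivariate CLT. Steps~1--3 are sound and more explicit than the paper's version: the kernel moment bound, the explicit form of $\boldsymbol\Sigma$, and the handling of the indicator are all correct.

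\textbf{Step~4 is not finished.} You correctly observe that $\widehat{IG}_m=U_n/(m\overline X)$ on $\{\sum_i X_i>0\}$. The paper's proof instead writes $\widehat{IG}_m=g(U_n,\overline X)$ with $g(x,y)=x/y$, which is off by exactly this factor. That slip is harmless for consistency, since $U_n/(m\overline X)\to\vartheta/(m\mu)=IG_m$, but it is not harmless for the variance.

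The factor $1/m$ cannot be absorbed without consequence. From your Step~3,
\[
\sqrt n\,\bigl(\widehat{IG}_m-IG_m\bigr)
=\frac1m\,\sqrt n\Bigl(\frac{U_n}{\overline X}-\frac{\vartheta}{\mu}\Bigr)+o_{\rm P}(1)
\stackrel{\mathscr D}{\longrightarrow}N\bigl(0,\chi^2/m^2\bigr),
\]
where $\chi^2$ is computed exactly as in the statement. Both ways of absorbing the factor give the same variance $\chi^2/m^2$ in the statement's notation:
\begin{itemize}
\item Putting $1/m$ into $g$ means using $x/(my)$, whose gradient at $(\vartheta,\mu)$ is $1/m$ times $\nabla g(\vartheta,\mu)$.
\item Putting it into $h$ replaces $\vartheta$ by $\vartheta/m$ and $\boldsymbol\Sigma$ by $D\boldsymbol\Sigma D$ with $D=\mathrm{diag}(1/m,1)$.
\end{itemize}

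For gamma mixtures $\chi^2>0$. The projection $h_1$ is strictly convex on $(0,\infty)$, so $m h_1(X)/\mu-\vartheta X/\mu^2$ is not a.s.\ constant. Hence the statement as literally written, with $g(x,y)=x/y$ and $\boldsymbol\Sigma$ the covariance of $\sqrt n(U_n-\vartheta,\overline X-\mu)$, overstates the asymptotic variance by the factor $m^2$.

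You should say this explicitly and record the corrected conclusion $\sqrt n(\widehat{IG}_m-IG_m)\stackrel{\mathscr D}{\longrightarrow}N(0,\chi^2/m^2)$. Equivalently, the limit is $N(0,\chi^2)$ once $g$ is replaced by $x/(my)$. As written, your proposal implies that fixing the centring leaves the stated $\chi^2$ intact, which is not the case.
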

\begin{proof}
Note that $\widehat{IG}_m$ can be written as
$
\widehat{IG}_m
=
g(U_n,\overline X),
$
where $U_n$
is the $U$-statistic with kernel defined in \eqref{de-U-stat}.
Since $\mathbb{E}[X^2]<\infty$, 
$h(X_1,\ldots,X_m)$ has finite second moment, and the joint central limit theorem for $U$-statistics of \citet[Theorem 7.3]{Hoeffding1948} and sample means yields
\[
\sqrt n
(
U_n-\vartheta,
\,
\overline{X}-\mu
)
\stackrel{\mathscr{D}}{\longrightarrow}
N_2(\boldsymbol{0},\boldsymbol{\Sigma}),
\]
where
$
\vartheta=
\mathbb{E}[X_{m:m}-X_{1:m}],
$
and $\boldsymbol{\Sigma}$ denotes the corresponding asymptotic covariance matrix.

Since $\mu>0$, the function
$
g(x,y)={x}/{y}
$
is continuously differentiable in a neighborhood of $(\vartheta,\mu)$, with gradient $
\nabla g(\vartheta,\mu)
=
\left(
1/\mu,
-{\vartheta}/{\mu^2}
\right)^\top
$.

Therefore, the multivariate delta method gives
\[
\sqrt n
(
\widehat{IG}_m-IG_m
)
\stackrel{\mathscr D}{\longrightarrow}
N(
0,
\chi^2
),
\]
which completes the proof.
\end{proof}

\section{Illustrative simulation study}\label{Illustrative simulation study}

In this section, Monte Carlo simulations are presented to assess the estimator $\widehat{IG}_{m}$ of the $m$th Gini index $IG_m$.
We analyze replications of the random sample $X_1^{(i)}, \ldots, X_n^{(i)}$ $(i=1,\ldots,M)$ generated by a $\operatorname{MG}(\boldsymbol{\theta})$ distribution (Definition \ref{definition:mixture-gamas}), $\boldsymbol{\theta}=(\pi_1, \cdots, \pi_{k-1}, \alpha_1,\cdots, \alpha_k, \lambda)^\top$, $\alpha_j>0$, $\lambda=1/\beta>0$, $\pi_j>0$, $\pi_k=1-\sum_{j=1}^{k-1} \pi_j$.
We assess the performance of the estimator using the Bias and the root mean square error (RMSE) 
$$ \operatorname{Bias} = \frac{1}{M} \sum_{i=1}^M \left[\widehat{IG}_{m}^{(i)} - IG_m(\boldsymbol{\theta}) \right]~~\text{and}~~ RMSE = \sqrt{\frac{1}{M} \sum_{i=1}^M \left[\widehat{IG}_{m}^{(i)} - IG_m(\boldsymbol{\theta}) \right]^2},$$ 
under different parameter scenarios. Here
 $\widehat{IG}_{m}^{(i)}=\widehat{IG}_{m}(X_1^{(i)},\ldots,X_n^{(i)})$ is given in \eqref{estimator}.
 
We also examine the sensitivity of the estimator to different values of $m$, and compare the results with the competitors' estimators: the bias-corrected (based on  Corollary~\ref{bias})
$$\widehat{IG}_m^{BC} = \widehat{IG}_m - \operatorname{Bias}(\widehat{IG}_m, IG_m),$$
and the parametric estimator $ IG_m(\widehat{\boldsymbol{\theta}})$ (based on Theorem~\ref{main-theo-0}), where $\widehat{\boldsymbol{\theta}}$ is the maximum likelihood estimate of the $\boldsymbol{\theta}$ parameter.

\subsection{Evaluating the Bias and RMSE of estimates}

We conducted $M=1,000$ Monte Carlo replications, generating samples $X_1, X_2,\ldots, X_n$ from the $\operatorname{GM}(\boldsymbol{\theta}$) considering for the simulations the true population's parameter $\pi_1=0.6$, $\alpha_1=0.5$, $\alpha_2\in\{0.5, 1,2,5\}$, $\beta=1.5$,  and the simple sizes $n\in\{10, 20, 50\}$. A sensitivity analysis of the results with respect to the choice of $m\in\{2,4\}$ is also presented.

Table \ref{tab:mc_alpha} presents the Bias and RMSE of the following estimators: $m$th Gini $\widehat{IG}_m$, the bias-corrected $\widehat{IG}_m^{BC}$, and the parametric estimator $IG_m(\widehat{\boldsymbol{\theta}})$. 
Here,
\begin{equation}\label{eq:IGm_mle}
    IG_m(\widehat{\boldsymbol{\theta}}) = 
			\dfrac{\displaystyle
				\int_0^\infty 
				\left\{
				1-F^m_X(t; \widehat{\bm{\theta}})
				-
				[
				1-F_X(t; \widehat{\bm{\theta}})
				]^m
				\right\}
				{\rm d}t
			}
			{\displaystyle
				m\mu 
			}
\end{equation}
is computed numerically. 
Observe that as the sample size $n$ increases, both the Bias and the RMSE of the Monte Carlo simulations generally decrease. 

The bias of $\widehat{IG}_m$ is small. However, for small sample sizes, $\widehat{IG}_m^{BC}$ becomes more accurate, exhibiting lower bias, as expected. For moderate sample sizes ($n=50$), $\widehat{IG}_m$ performs as well as $\widehat{IG}_m^{BC}$, which is also expected since bias (Corollary \ref{bias}) decays to zero as $n$ increases.
The parametric estimator, although yielding good results, was the one that presented the largest biases. However, across all scenarios analyzed, $IG_m(\widehat{\boldsymbol{\theta}})$ achieved the lowest RMSE. 

\begin{table}[H]
  \centering
 \caption{Bias and RMSE for the estimates of the $\widehat{IG}_m$, $\widehat{IG}_m^{BC}$ and $IG_m(\widehat{\boldsymbol{\theta}})$ based on Monte Carlo simulations from $X \sim \mathrm{GM}(\boldsymbol{\theta})$, with $m \in \{2,4\}$, $n\in\{10,20,50\}$, $\alpha_1=0.5$, $\alpha_2\in\{0.5, 1,2,5\}$, $\pi_1=0.6$, $\beta=1.5$ and $M = 1,000$ replications.}
   \label{tab:mc_alpha}
\begin{tabular}{llllcccccc}
\hline
&&&& \multicolumn{2}{c}{$\widehat{IG}_m$} &
\multicolumn{2}{c}{$\widehat{IG}_m^{BC}$} &
\multicolumn{2}{c}{$IG_m(\widehat{\boldsymbol{\theta}})$} \\
\cline{5-6}\cline{7-8}\cline{9-10}
$m$&$\alpha_2$ & $n$&$IG_m$ &
Bias & RMSE &
Bias & RMSE &
Bias & RMSE \\
\hline
2 & 0.5    & 10 & 0.6366      & -0.0031 & 0.0940 & -0.0120 & 0.0919 & -0.0601 & 0.1051 \\
2 & 0.5    & 20 & 0.6366      & -0.0073 & 0.0710 & -0.0127 & 0.0710 & -0.0373 & 0.0759 \\
2 & 0.5    & 50 & 0.6366      & -0.0033 & 0.0452 & -0.0057 & 0.0457 & -0.0176 & 0.0461 \\
2 & 1      & 10 & 0.5914      & 0.0061  & 0.0953 & -0.0038 & 0.0928 & -0.0487 & 0.1003 \\
2 & 1      & 20 & 0.5914      & 0.0066  & 0.0706 & 0.0007  & 0.0696 & -0.0211 & 0.0704 \\
2 & 1      & 50 & 0.5914      & 0.0018  & 0.0440 & -0.0004 & 0.0439 & -0.0095 & 0.0441 \\
2 & 2      & 10 & 0.5802      & 0.0142  & 0.0988 & 0.0004  & 0.0948 & -0.0390 & 0.0979 \\
2 & 2      & 20 & 0.5802      & 0.0044  & 0.0689 & -0.0033 & 0.0688 & -0.0217 & 0.0691 \\
2 & 2      & 50 & 0.5802      & 0.0004  & 0.0443 & -0.0032 & 0.0442 & -0.0100 & 0.0451 \\
2 & 5      & 10 & 0.6068      & 0.0274  & 0.1008 & 0.0066  & 0.0986 & -0.0322 & 0.0951 \\
2 & 5      & 20 & 0.6068      & 0.0158  & 0.0765 & 0.0027  & 0.0760 & -0.0137 & 0.0708 \\ 
2 & 5      & 50 & 0.6068      & 0.0032  & 0.0438 & -0.0023 & 0.0442 & -0.0080 & 0.0429 \\\hline
4 & 0.5    & 10 & 0.5874      & 0.0039  & 0.0924 & -0.0069 & 0.0891 & -0.0525 & 0.0970 \\
4 & 0.5    & 20 & 0.5874      & -0.0045 & 0.0692 & -0.0120 & 0.0694 & -0.0355 & 0.0722 \\
4 & 0.5    & 50 & 0.5874      & -0.0027 & 0.0451 & -0.0053 & 0.0453 & -0.0174 & 0.0453 \\
4 & 1      & 10 & 0.5435      & 0.0061  & 0.0899 & -0.0049 & 0.0865 & -0.0473 & 0.0921 \\
4 & 1      & 20 & 0.5435      & 0.0050  & 0.0661 & -0.0019 & 0.0655 & -0.0222 & 0.0651 \\
4 & 1      & 50 & 0.5435      & -0.0039 & 0.0431 & -0.0067 & 0.0434 & -0.0158 & 0.0447 \\
4 & 2      & 10 & 0.5306      & 0.0138  & 0.0931 & -0.0001 & 0.0903 & -0.0381 & 0.0894 \\
4 & 2      & 20 & 0.5306      & 0.0050  & 0.0688 & -0.0047 & 0.0678 & -0.0209 & 0.0677 \\
4 & 2      & 50 & 0.5306      & 0.0002  & 0.0427 & -0.0043 & 0.0425 & -0.0098 & 0.0426 \\
4 & 5      & 10 & 0.5495      & 0.0242  & 0.1024 & 0.0017  & 0.1010 & -0.0296 & 0.0916 \\
4 & 5      & 20 & 0.5495      & 0.0142  & 0.0719 & 0.0004  & 0.0709 & -0.0124 & 0.0668 \\
4 & 5      & 50 & 0.5495      & 0.0041  & 0.0442 & -0.0017 & 0.0446 & -0.0058 & 0.0423\\
\hline
\end{tabular}
\end{table}

\subsection{Analysis of the estimator in Gamma mixture with different scales}

In Section \ref{sec:02}, we restrict our theoretical analysis to Gamma mixtures with equal scales \eqref{eq:mixture-gama}, which allows us to derive a closed-form expression for $IG_m$ (Theorem~\ref{main-theo-0}) and its bias (Corollary~\ref{bias}). Moreover, the $m$th Gini index is invariant with respect to the scale parameter. However, real-data applications may involve datasets that require different scales for proper modeling. In this context, it is important to investigate whether the estimators $\widehat{IG}_m$ and $IG_m(\widehat{\boldsymbol{\theta}})$ continue to exhibit low and decreasing bias and RMSE.

Table \ref{tab:mc2} presents the results of Bias and RMSE for the estimates of $\widehat{IG}_m$ and $IG_m(\widehat{\boldsymbol{\theta}})$ based on $M = 1,000$ Monte Carlo simulations from 
$$f_X(x; \bm{\theta}) =  \pi_1 f_{Z_1}(x; \alpha_1, \beta_1)+ (1-\pi_1) f_{Z_2}(x; \alpha_2, \beta_2),~~x > 0,$$
were $Z_j\sim\operatorname{Gamma}(\alpha_j, \beta_j)$.
The parameters chosen are close to real values presented in the next section.
Both estimators performed well, with $\widehat{IG}_m$ being slightly more precise, although both exhibited negative bias values. The bias and RMSE decreased as $n$ increased. The RMSE of the two estimators was very similar across the scenarios considered. Increasing $m$ from 2 to 4 does not change the general results.

\begin{table}[H]
  \centering
 \caption{Bias and RMSE for the estimates of the $\widehat{IG}_m$ and $IG_m(\widehat{\boldsymbol{\theta}})$ based on Monte Carlo simulations from $X \sim \mathrm{GM}(\pi_1, \alpha_1,\alpha_2, \beta_1, \beta_2)$, with $m \in \{2,4\}$, $n\in\{10,20,50\}$, $\alpha_1=1$, $\alpha_2\in\{0.5, 1,2,3\}$, $\pi_1=0.1$, $\beta_1=40$, $\beta_2=8$, and $M = 1,000$ replications.}
   \label{tab:mc2}
\begin{tabular}{llllcccc}
\hline
&&&& \multicolumn{2}{c}{$\widehat{IG}_m$} &
\multicolumn{2}{c}{$IG_m(\widehat{\boldsymbol{\theta}})$} \\
\cline{5-6}\cline{7-8}
$m$&$\alpha_2$ & $n$&$IG_m$ &
Bias & RMSE &
Bias & RMSE \\
\hline
2 & 0.5    & 10 & 0.7362      &$-$0.0387 & 0.1287 &$-$0.1038 & 0.1495 \\
2 & 0.5    & 20 & 0.7362      &$-$0.0270 & 0.0924 &$-$0.0674 & 0.1059 \\
2 & 0.5    & 50 & 0.7362      &$-$0.0091 & 0.0601 &$-$0.0339 & 0.0668 \\
2 & 1      & 10 & 0.5857      &$-$0.0228 & 0.1259 &$-$0.0797 & 0.1336 \\
2 & 1      & 20 & 0.5857      &$-$0.0200 & 0.0941 &$-$0.0543 & 0.0995 \\
2 & 1      & 50 & 0.5857      &$-$0.0099 & 0.0615 &$-$0.0278 & 0.0642 \\
2 & 2      & 10 & 0.4293      &$-$0.0144 & 0.1054 &$-$0.0550 & 0.1068 \\
2 & 2      & 20 & 0.4293      &$-$0.0096 & 0.0807 &$-$0.0328 & 0.0798 \\
2 & 2      & 50 & 0.4293      &$-$0.0044 & 0.0522 &$-$0.0156 & 0.0512 \\
2 & 3      & 10 & 0.3516      &$-$0.0091 & 0.0883 &$-$0.0421 & 0.0880 \\
2 & 3      & 20 & 0.3516      &$-$0.0051 & 0.0617 &$-$0.0223 & 0.0612 \\
2 & 3      & 50 & 0.3516      &$-$0.0006 & 0.0423 &$-$0.0079 & 0.0416 \\
4 & 0.5    & 10 & 0.6948      &$-$0.0399 & 0.1327 &$-$0.1113 & 0.1523 \\
4 & 0.5    & 20 & 0.6948      &$-$0.0327 & 0.1030 &$-$0.0749 & 0.1136 \\
4 & 0.5    & 50 & 0.6948      &$-$0.0151 & 0.0646 &$-$0.0426 & 0.0735 \\
4 & 1      & 10 & 0.5468      &$-$0.0342 & 0.1290 &$-$0.0914 & 0.1368 \\
4 & 1      & 20 & 0.5468      &$-$0.0180 & 0.0991 &$-$0.0542 & 0.1017 \\
4 & 1      & 50 & 0.5468      &$-$0.0092 & 0.0652 &$-$0.0294 & 0.0687 \\
4 & 2      & 10 & 0.3971      &$-$0.0120 & 0.0986 &$-$0.0535 & 0.0970 \\
4 & 2      & 20 & 0.3971      &$-$0.0094 & 0.0767 &$-$0.0322 & 0.0760 \\
4 & 2      & 50 & 0.3971      &$-$0.0044 & 0.0503 &$-$0.0160 & 0.0498 \\
4 & 3      & 10 & 0.3237      &$-$0.0065 & 0.0864 &$-$0.0390 & 0.0839 \\
4 & 3      & 20 & 0.3237      &$-$0.0061 & 0.0618 &$-$0.0234 & 0.0605 \\
4 & 3      & 50 & 0.3237      &$-$0.0013 & 0.0400 &$-$0.0088 & 0.0394\\
\hline
\end{tabular}
\end{table}

\subsection{Computational issues}
We conclude this section by highlighting computational challenges and the strategies adopted in implementing and validating the estimators:

 \begin{itemize}
\item  R implementation and required packages: The simulation study was programmed in R software 4.5.1. 
The incomplete gamma function was obtained using the package \textsf{gsl}.
We also used the native R functions \textsf{integrate}, \textsf{combn}, \textsf{pgamma}, and \texttt{rgamma}.,
For the simulations, we used the \texttt{evmix} package (for random sample generation from Gamma mixtures), mixtools (for parameter estimation), and the packages \texttt{parallel}, \texttt{foreach}, and \texttt{doRNG} (for parallel computations with 7 cores). 

\item For simplicity in using \texttt{mixtools}, the programming equations were expressed in terms of scale parameters $(\beta)$ rather than rate parameters $(\lambda)$, and all formulas were adapted as needed. 

\item As $m$ increases, the growth in combinatorial triples can become computationally demanding, making replications of the estimator extremely costly.
The computational cost associated with $\binom{n}{m}$ also increases as $n$ grows. Our simulation analysis, therefore, focuses on low and medium sample sizes. However, the real data application presented in Section \ref{sec:applications} involves a dataset with a large sample size.

\item Instability in the estimates was observed during the simulation study. In Table~\ref{tab:mc_alpha}, noisy samples were discarded, and new samples were generated to ensure ($M = 1,000$) replications. The percentage of discarded samples decreased as $n$ or $m$ increased, ranging between 4–16\% for ($n = 50$), depending on the parameter choices. Low values of $\alpha_2$ also resulted in higher discard rates. In Table~\ref{tab:mc2}, no samples were discarded. 

\item In Definition \ref{definition:mixture-gamas}, we assume $\lambda_1=\cdots=\lambda_k=\lambda$. That enables us to make explicit mathematical calculations. However, when working with real data, this assumption is quite restrictive and requires a suitable estimation strategy. 
To estimate $\boldsymbol{\theta}$, one can implement an adapted Expectation–Maximization (EM) algorithm (cf. \cite{mclachlan2008algorithm}). 
Another option is to work with different scales, but an estimate of the bias will not be known.
We relied on the information criterion BIC together with graphical analysis to specify the number of components in real data.
 \end{itemize}

\section{Application}\label{sec:applications}
In order to evaluate the $m$th Gini index for gamma mixtures, a public data set of income data was used.

\subsection{Data}
 
 The data\footnote{Available at \url{https://www.bancaditalia.it/statistiche/tematiche/indagini-famiglie-imprese/bilanci-famiglie/documentazione/ricerca/ricerca.html?min_anno_pubblicazione=2008&max_anno_pubblicazione=2008} (accessed on June 2026)} comes from the Bank of Italy's Survey on Household Income and Wealth from the year 2008.
The income is composed of payroll income (net wages, salaries and fringe benefits), pensions and net transfers (pensions, arrears and other transfers), net self-employment income (self-employment income and entrepreneurial income), and property income (either from real estate or from financial assets).

We use the raw data from the survey, available as the data sets RFAM08 (income as $Y$ variable).
The filtered sample yielded 7,958 observations after removing entries whose income is negative.
Table \ref{tab:summary} presents descriptive statistics for the income data. The distance between the maximum and the median, together with the kurtosis (CK) and asymmetry (CS) coefficients, indicates the presence of a heavy right tail.

\begin{table}[H]
\centering
\caption{Descriptive statistics for the income dataset.}
\label{tab:summary}
\begin{tabular}{ccccccccc}
\hline
Min.       & 1st Qu.    & Median     & Mean       & 3rd Qu.    & Max.       & SD         & CS         & CK        \\
\hline
0.0657 & 17.8912 & 26.7837 & 32.4236 & 40.5865 & 629.3397 & 24.3318 & 5.2351 & 75.1661\\
\hline
\end{tabular}
\end{table}

\subsection{Income inequality estimation with Gamma mixtures}
Due to its positive support and suitability for income decomposition, the Gamma mixture is a natural probability model for such data. Table \ref{tab:mth_gini} presents the $IG_m$ estimates using \eqref{estimator} and \eqref{eq:IGm_mle} estimators with $m\in\{2,4\}$ and $k\in\{2,3\}$. Two scenarios are considered for the parametric estimator: equal scales ($\widehat{\boldsymbol{\theta}}$) and different scales ($\widetilde{\boldsymbol{\theta}}$). Here, $\widetilde{\boldsymbol{\theta}}$ is obtained in a two-stage pseudolikelihood estimation.

\begin{table}[H]
\centering
\caption{Estimates of $m$th-Gini index for different values of $m$ and number of componets $k$.} 
\label{tab:mth_gini}

\begin{tabular}{ccccc}
  \hline
  &&&&\\
  $m$ & $k$ & $IG_m(\widehat{\boldsymbol{\theta}})$ & $IG_m(\widetilde{\boldsymbol{\theta}})$ & $\widehat{IG}_m$ \\ 
  \hline
 2 & 2 & 0.3374 & 0.3194 & 0.3399 \\ 
 2 & 3 & 0.3375 & 0.2981 & 0.3399 \\ 
 4 & 2 & 0.3112 & 0.2917 & $(*)$ \\ 
 4 & 3 & 0.3113 & 0.2723 & $(*)$ \\ 
   \hline
\end{tabular}

\medskip
\footnotesize
$(*)$ Not computed because of the large sample size.
\end{table}

\subsection{Model selection}

Table \ref{tab:model_selection} presents the parameter estimates for the Gamma distribution, the Gamma mixtures with two (GM2) and three (GM3) components, and the Gamma mixtures with equal scale parameters (GM2\_ss and GM3\_ss). Minimizing the BIC information criterion suggests that a Gamma mixture with $k=2$ components and different scales would be appropriate. However, Figure \ref{fig:ic_qqplots} indicates that  $k=3$ components are needed to describe the right tail of the randomized residuals and provide a natural interpretation: the first component accounts for low-income values (approximately 88\% of the data). In contrast, the second component (11\%) captures medium- to high-income values, and the last component captures the top 1\% values.

Figure \ref{fig:mixture} shows the PDFs of each component in the mixture. Figure \ref{fig:ic_hist_ecdf} shows the PDF fitted to the histogram and the CDF fitted to the ECDF, both confirming the QQ-plot analysis that the GM3 model adequately describes the income data.

\begin{table}[H]
\centering
\caption{Estimated parameters and information criteria for model selection.} 
\label{tab:model_selection}
\begin{tabular}{lcccc}
  \hline
 Model & Estimates &  Loglik & AIC & BIC \\ 
  \hline
 Gamma &   $(\hat{\alpha}, \hat{\lambda})=(2.65, 0.08)$ & $-$34003.88 & 68011.76 & 68025.72 \\ 
 GM2 & $\widehat{\boldsymbol{\theta}}=(0.89,  3.50, 1.48, 8.37, 39.13)$ & $-$33731.28 & 67472.55 & 67507.46 \\ 
 GM3 & $\widehat{\boldsymbol{\theta}}=(0.88, 0.11, 3.52, 1.51, 458.14, 8.31, 37.32,   0.49)$ & {$-$33730.77} & {67477.54} & {67533.40} \\ 
 GM2\_ss & $\widetilde{\boldsymbol{\theta}}=(0.89, 3.50, 1.48, 9.8)$ & $-$34192.01 & 68392.02 & 68419.95 \\ 
 GM3\_ss & $\widetilde{\boldsymbol{\theta}}=(0.73, 0.055, 4.28, 1.05, 6.57, 7.01)$ & $-$34463.04 & 68938.09 & 68979.98 \\ 
   \hline
\end{tabular}
\end{table}

\begin{figure}[H]
	    \centering
	    \includegraphics[width=1.0\linewidth]{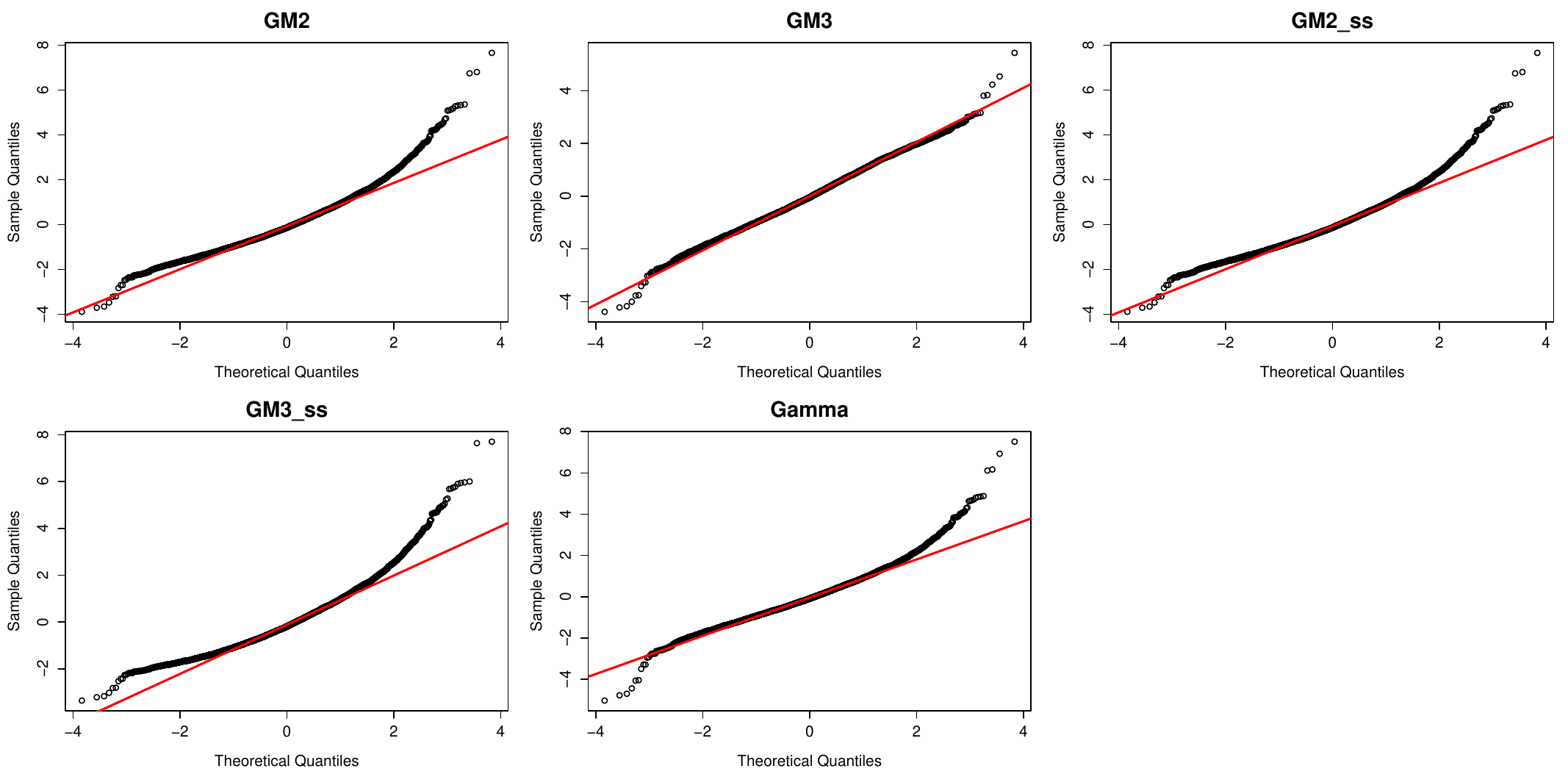}
	    \caption{Quantile-Quantile plot displaying residuals from fitted models for income data.}
	    \label{fig:ic_qqplots}
	\end{figure}

\begin{figure}[H]
	\centering
	\includegraphics[width=1.0\linewidth]{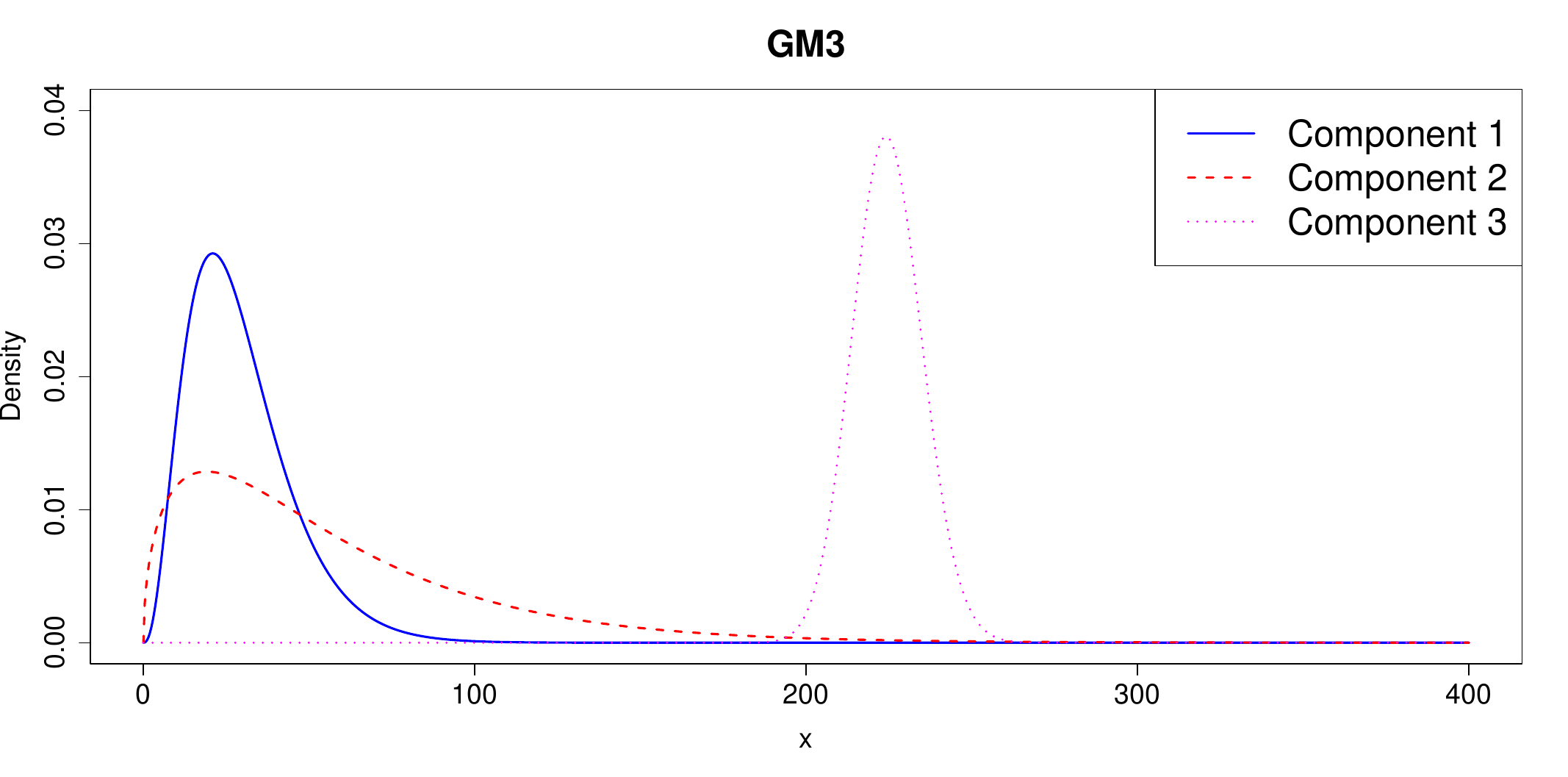}
	\caption{Fitted PDFs (left) and empirical CDFs (right) for income data.}
	\label{fig:mixture}
\end{figure}

\begin{figure}[H]
	\centering
	\includegraphics[width=1.0\linewidth]{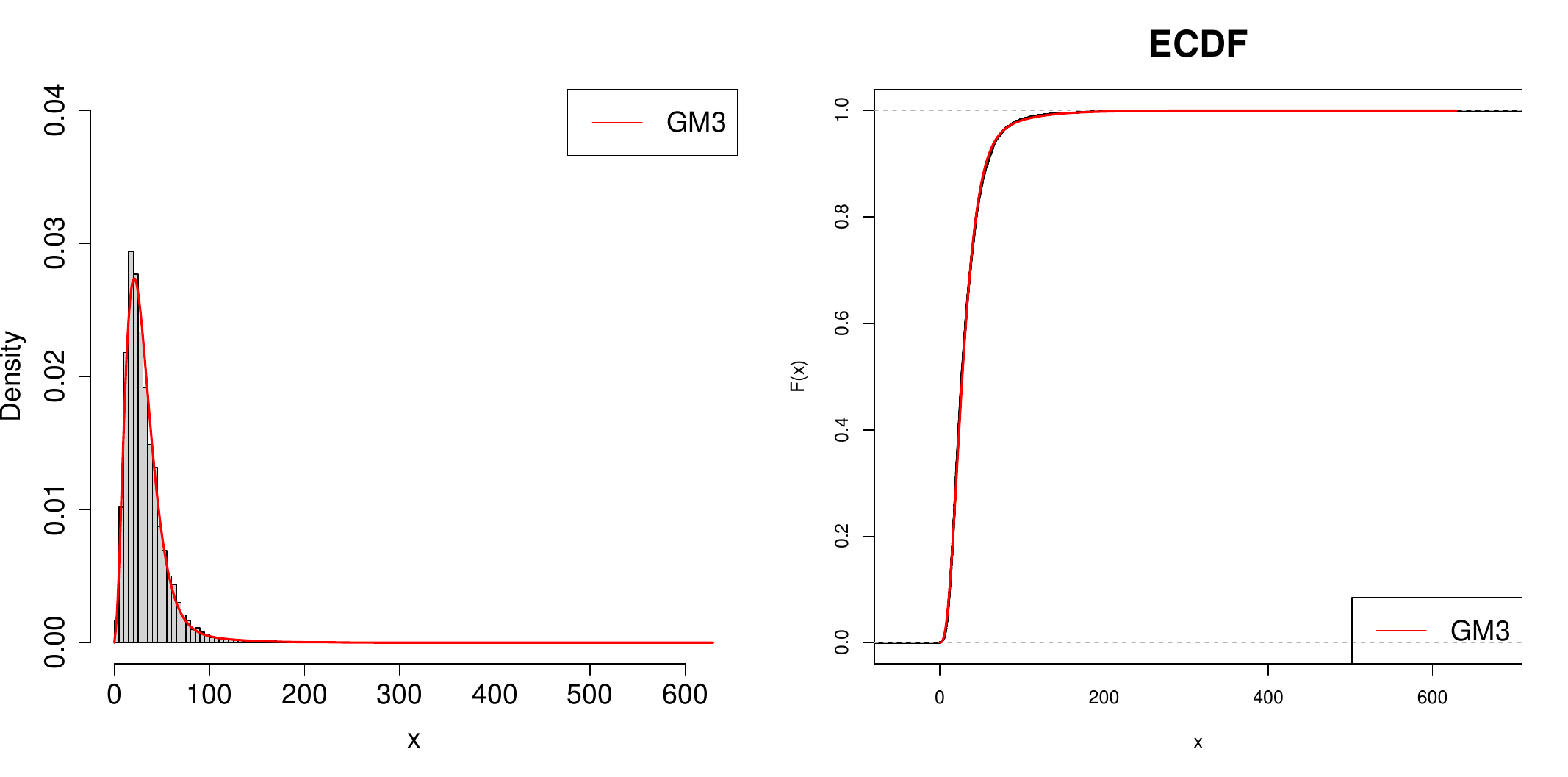}
	\caption{Fitted PDFs (left) and empirical CDFs (right) for income data.}
	\label{fig:ic_hist_ecdf}
\end{figure}


\section{Concluding remarks}\label{concluding_remarks}


This paper investigated the finite-sample behavior of the $m$th Gini index estimator under finite mixtures of gamma distributions with a common rate parameter. We first derived an analytical expression for the population $m$th Gini index and then obtained explicit formulas for the expectation and bias of its nonparametric estimator. These results extend previous unbiasedness and bias expressions available for single gamma populations to a broader and more flexible class of mixture models.

We also established the asymptotic unbiasedness, consistency, and asymptotic normality of the estimator, together with an asymptotic lower bound for its bias. The Monte Carlo study complemented the theoretical developments by comparing the proposed estimator with bias-corrected and parametric alternatives over a range of parameter configurations, including scenarios beyond the assumptions required by the theoretical results. Overall, the simulations indicate that the finite-sample bias decreases with the sample size and illustrate the advantages and limitations of each estimation approach. 

Finally, the analysis of the Bank of Italy income data illustrated the practical applicability of the proposed methodology. By fitting finite gamma mixtures to the observed data, the proposed estimator provided a flexible framework for measuring income inequality while accounting for population heterogeneity. Although the analytical results were derived under a common rate parameter, the empirical analysis also suggests that the proposed methodology remains useful in more general settings. Future research may extend these results to mixtures with distinct rate parameters and to other flexible families of positive distributions.

\section*{Acknowledgments}

The research was supported in part by CNPq and CAPES grants from the Brazilian government.

\section*{Declaration}

There are no conflicts of interest to disclose.



\bibliographystyle{apalike}
\bibliography{paper-ref}

\end{document}